\documentclass[a4paper,11pt]{article}
\pdfoutput=1

\usepackage{doc-jmr}
\usepackage{bm}
\usepackage{localmacros}

\title{Efficient capacity-achieving entanglement generation with application to pure-loss Bosonic channels}
\author[1,2]{Avijit Mandal\footnote{avijit.mandal@duke.edu}}
\author[4]{Christophe Piveteau}
\author[1,2,3]{Henry D. Pfister}
\author[5]{Joseph M. Renes}
\affil[1]{Department of Electrical and Computer Engineering, Duke University}
\affil[2]{Duke Quantum Center, Duke University}
\affil[3]{Department of Mathematics, Duke University}
\affil[4]{IBM Research Europe -- Zurich}
\affil[5]{Institute for Theoretical Physics, ETH Zurich}

\date{}
\begin{document}

\maketitle

\begin{abstract}
While the ultimate limits on the rate of quantum communication over noisy channels are well-established, very few efficient coding schemes achieving these limits are known. 
This contrasts sharply with transmitting classical data over classical channels, where polar codes and low-density parity-check codes both efficiently achieve the classical capacity of any noisy channel. 
In this work we construct an efficient entanglement generation scheme which achieves the coherent information for a broad class of channels. 
For degradable channels in the class, which includes truncated versions of the pure loss Bosonic channel, the resulting scheme achieves the quantum capacity. 
The construction makes use of polar codes for two kinds of classical-input, quantum-output (CQ) channels: those with commuting outputs and those whose outputs are heralded mixtures of pure states. 
The former case is amenable to decoding by classical polar decoding algorithms, the latter by the belief propagation with quantum messages (BPQM) algorithm. 
To our knowledge, this construction is the first explicit and efficiently decodable protocol that achieves the energy-constrained quantum capacity of the pure-loss Bosonic channel.
\end{abstract}

\section{Introduction}
The notion of quantum communication can be traced back to the ``conjugate coding'' scheme of Wiesner~\cite{wiesner_conjugate_1983} in the late 1970s, which demonstrated that transmitting {quantum} states from point to point could be useful, and that quantum effects were not just a source of noise. 
This naturally raised the question of the ultimate limits of quantum information transmission over noisy channels, which was settled by the converse bounds of Barnum, Knill, Nielsen, and Schumacher~\cite{schumacher_quantum_1996,barnum_information_1998,barnum_quantum_2000} matching random coding achievability bounds by Lloyd~\cite{lloyd_capacity_1997}, Shor~\cite{shor_quantum_2002}, and Devetak~\cite{devetak_private_2005}. 
These works established that the ultimate communication rate of a noisy channel is its regularized coherent information. 

The question of achieving these limits or even just the coherent information itself by efficient means is very much still open.
In particular, the major difficulty is in constructing a suitable decoder. 
There is significantly more progress in showing that particular codes can achieve high rates when decoded by the optimal or a near-optimal decoder for specific channels, but efficiency of the decoding operation is most often left open. 
Indeed, by the decoupling approach, one can establish that a code is good without analyzing the decoder at all~\cite{hayden_decoupling_2008}. 
Nevertheless, several constructions with both efficient encoders and decoders that can operate at rates up to the coherent information are known for specific channels: Reed-Muller codes for erasure channels~\cite{kumar_reed-muller_2016}, quantum polar codes for erasure and Pauli channels~\cite{renes_efficient_2012,dupuis_polarization_2021} and the qubit amplitude damping channel~\cite{renes_belief_2017-3,piveteau_quantum_2022}, low-depth Clifford circuits and tensor network decoding of Pauli channels~\cite{darmawan_lowdepth_2024,liu_approximate_2026}, and Bosonic Gaussian displacement channels using polar codes concatenated with Gottesman-Kitaev-Preskill (GKP) codes~\cite{subramanian_achievable_2025}. 

In this paper, we construct efficient entanglement generation schemes which achieve the coherent information for a class of channels we refer to as pure discrete phase covariant (PDPC) channels. 
These channels include amplitude damping and dephasing channels of any dimension, resonant multilevel amplitude damping (ReMAD) channels~\cite{chessa_resonant_2023}, as well as truncated versions of the pure loss and pure amplification Bosonic channels. 
Our construction nominally makes use of entanglement assistance, though when the PDPC channel is degradable, this can be removed. 
We consider entanglement generation instead of communication, though they are closely related. 
Instead of accepting an input to be transmitted, the generation task is simply to create high-fidelity entanglement between sender and receiver. 
It is known that the entanglement generating capacity of a channel is equal to its communication capacity~\cite[\S4.5 \& \S6.1]{kretschmann_tema_2004}.

Our construction is not unrelated to Wiesner's conjugate coding in that, as in~\cite{renes_physical_2008,renes_efficient_2012,renes_polar_2014,renes_uncertainty_2016}, the quantum decoder is constructed from two POVMs designed to decode classical information transmitted over the channel in one of two conjugate bases. 
The quantum decoder thus inherits the circuit complexity of the two POVMs. 
For PDPC channels and suitable codes such as polar codes, the corresponding decoding POVMs can be performed efficiently using classical polar decoders~\cite{arikan_channel_2009,arikan_source_2010} and the belief propagation with quantum messages (BPQM) algorithm~\cite{renes_belief_2017-3,rengaswamy_belief_2021,pfister_belief_2022,piveteau_quantum_2022,piveteau_efficient_2025}. 
The resulting entanglement generation protocol will achieve high fidelity outputs at rates up to the coherent information. 
Moreover, the required amount of entanglement assistance will vanish when using polar codes~\cite{sutter_universal_2014,renes_alignment_2016}.

\subsection{Technical overview}
Before diving into the details, let us give a higher-level technical overview.
We first show the construction for general entanglement-assisted CSS (EACSS) codes, general channels, and generic POVMs and then discuss efficiency when using polar codes and specific decoders for PDPC channels. 
For the purposes of entanglement generation, the sender Alice prepares $m$ i.i.d.\ copies of a pure bipartite state $\ket{\chi}_{AB}$ with $\text{dim}(A)=d$ for prime $d$ and sends half via the noisy channel to the receiver Bob. 
They now share $\sigma_{AB}^{\otimes m}$ for some state $\sigma_{AB}$ and proceed with entanglement distillation. 

To this end, Alice applies a syndrome extraction circuit $V$ for an EACSS code. 
The outputs corresponding to the commuting stabilizers are measured and the results transmitted to Bob, while the outputs corresponding to the noncommuting stabilizers must be teleported to Bob. 
The results of the stabilizer measurements give $X$- and $Z$-type syndromes, though of course the noncommuting stabilizers cannot be measured simultaneously. 

Next, Bob applies a quantum decoder constructed from two POVMs. 
The first POVM $\Lambda$ is designed to recover the value of $z^m$, the result of a hypothetical measurement on $A^m$ in the standard basis, from the syndrome and the $B^m$. More concretely, the POVM needs to distinguish the states  $\rho_{B^m}(z^m)=\rho_{B_1}(z_1)\otimes \cdots \rho_{B_m}(z_m)$, where $\rho_B(z)$ is the normalized version of $\bra{z}_A\sigma_{AB}\ket{z}_A$, when the syndrome of $z^m$ is known. 
This operation is performed coherently, storing the result in an ancilla register $A'^m$, with $A'\simeq A$, and simply controlling the operation on the standard basis of the teleported systems. 

The task of the second POVM $\Gamma$ is to determine $x^m$ from the states $\sigma_{A'^mB^m}(x^m)$ for $\sigma_{A'B}(x)=Z_{A'}^{x}\sigma_{A'B}Z_{A'}^{-x}$ and $Z_{A'}$ the generalized Pauli $Z$ operator, when the syndrome of $x^m$ is known. 
Again this operation is performed coherently, though now on Bob's $A'^mB^m$ registers, and the result is stored in a second ancilla. 
The reason this measurement and these states are relevant is the following. 
We can regard $x^m$ as the result of a hypothetical measurement on $A^m$ in the Fourier conjugate basis. 
But since the first stage is already complete, assuming that $z^m$ was properly coherently recovered, Bob now effectively has a coherent copy of the standard basis of $A$ in his system $A'$. Working out the marginal state of $A'B$ given outcome $x$ leads to $\sigma_{AB}(x)$. 

Once both POVMs have been coherently implemented, to obtain a maximally-entangled state in Alice's qubits corresponding to the logical information and Bob's second ancilla, it only remains to remove some phases which depend on the logical information by a suitable controlled-phase operation. 
The entire distillation operation is depicted in Figure~\ref{fig:entanglementgeneration}. 
Supposing that the first POVM has an average error (averaging over the $z^m$ values) of $\eps_Z$ and the second an average error of $\eps_X$, Theorem~\ref{thm:edcircuit} establishes that the fidelity of the output with the entangled state is at least $1-(\sqrt{\eps_X}+\sqrt{\eps_Z})^2$. 
Meanwhile, the net rate of entanglement generation is the size of the output minus the size of the teleported qubits, i.e.\ the number of noncommuting stabilizers.

\begin{figure}
\begin{center}

\resizebox{0.75\textwidth}{!}{%
\begin{tikzpicture}[
  x=1cm,y=1cm,
  line width=0.75pt,
  line cap=round,
  font=\normalsize,
  gate/.style={draw,fill=white,minimum width=0.85cm,
               minimum height=0.65cm,inner sep=3pt},
  control/.style={circle,draw,fill=black,inner sep=0pt,
                  minimum size=3.5pt},
  target/.style={circle,draw,fill=white,inner sep=0pt,
                 minimum size=0.39cm},
  pics/discard/.style={code={
  \begin{scope}[scale=0.6, transform shape]
    \draw (0,-0.22) -- (0,0.22);
    \draw (0.11,-0.15) -- (0.11,0.15);
    \draw (0.22,-0.08) -- (0.22,0.08);
    \end{scope}
  }},
  pics/bundle/.style={code={\draw (-0.065,-0.14)--(0.065,0.14);}}
]

\def\yBar{0}
\def\yCheck{-0.70}
\def\yHat{-1.40}
\pgfmathsetmacro{\checkhaty}{(\yCheck + \yHat) / 2}
\def\yUnder{-2.1}
\def\yB{-4.10}
\def\yAnc{-5.0}
\pgfmathsetmacro{\avgy}{(\yB + \yAnc) / 2}
\def\yOut{-5.9}
\def\xLambda{5.90}
\def\xF{7.0}
\def\xGamma{7.70}
\def\xFinv{8.350}
\def\xZ{9.5}
\def\xDiscard{10.3}
\def\xEnd{11.5}

\coordinate (sourceA) at (1.750,\checkhaty);
\coordinate (sourceB) at (1.750,\yB);
\coordinate (sourceMid) at ($(sourceA)!0.5!(sourceB)$);
\draw[
    decorate, 
    decoration={brace, amplitude=5pt}, 
]  (1.4, \yB - 0.07) -- (1.4, \checkhaty + 0.07);
\node[anchor=east,inner sep=0pt] (state) at ($(sourceMid)+(-0.70,0)$)
  {$\sigma_{AB}^{\otimes m}$};
\node[gate,minimum width=0.8cm,minimum height=2.8cm]
  (V) at (4.05,\checkhaty) {$V$};
\node
  (N) at (4.05,\yB) {};

\draw (sourceA) -- (V.west)
  node[midway,above=3pt] {\small $A^m$};
\pic at ($(sourceA)!0.5!(V.west)$) {bundle};
\draw (sourceB) -- (N.west)
  node[midway,above=4pt] {\small $B^m$};
\pic at ($(sourceB)!0.5!(N.west)$) {bundle};

\draw (V.east |- 0,\yBar) -- (\xEnd,\yBar);
\foreach \yy in {\yCheck,\yHat,\yUnder}
  \draw (V.east |- 0,\yy) -- (\xDiscard,\yy);
\node[above=1pt] at (4.8,\yBar) {\scriptsize $\bar{A}$};
\node[above=1pt] at (4.8,\yCheck) {\scriptsize $\widecheck{A}$};
\node[above=1pt] at (4.8,\yHat) {\scriptsize $\widehat{A}$};
\node[above=1pt] at (4.8,\yUnder) {\scriptsize $\underline{A}$};

\draw (N.west) -- (\xDiscard,\yB);
\draw (4.7,\yAnc) node[left=5pt] {\normalsize $\ket{0}_{A'^{m}}$}
  -- (\xDiscard,\yAnc);
\draw (4.7,\yOut) node[left=5pt] {$\ket{0}_{\bar{R}}$}
  -- (\xEnd,\yOut);

\draw (\xLambda,\yCheck) -- (\xLambda,\yAnc);
\node[gate,minimum size=.80cm]
  (Lambda) at (\xLambda,\yB) {$U_\Lambda$};
\node[control] at (\xLambda,\yCheck) {};
\node[control] at (\xLambda,\yUnder) {};
\node[target] (targetLambda) at (\xLambda,\yAnc) {};
\draw[shorten >= 0.5pt] (targetLambda.west) -- (targetLambda.east)
      (targetLambda.north) -- (targetLambda.south);

\node[gate,minimum size=0.5cm]
    at (\xF,\yHat) {$F^\dagger$};

\node[gate,minimum size=0.5cm]
    at (\xF,\yUnder) {$F^\dagger$};

\node[gate,minimum size=0.5cm]
    at (\xZ,\yBar) {$F^\dagger$};
    \node[gate,minimum size=0.5cm]
    at (\xFinv,\yHat) {$F$};

\node[gate,minimum size=0.5cm]
    at (\xFinv,\yUnder) {$F$};

\draw (\xGamma,\yHat) -- (\xGamma,\yOut);
\node[gate,minimum width=0.80cm,minimum height=1.6cm]
  (Gamma) at (\xGamma,\avgy) {$U_\Gamma$};
\node[control] at (\xGamma,\yHat) {};
\node[control] at (\xGamma,\yUnder) {};
\node[target] (targetGamma) at (\xGamma,\yOut) {};
\draw[shorten >= 0.5pt] (targetGamma.west) -- (targetGamma.east)
      (targetGamma.north) -- (targetGamma.south);

\draw (\xZ,\yAnc) -- (\xZ,\yOut);
\node[gate,minimum size=0.6cm] at (\xZ,\yAnc) {$\bar{Z}$};
\node[control] at (\xZ,\yOut) {};

\foreach \yy in {\yCheck,\yHat,\yUnder,\yB,\yAnc}
  \pic at (\xDiscard,\yy) {discard};

\end{tikzpicture}
}
\end{center}
\caption{\label{fig:entanglementgeneration} Entanglement distillation scheme for input $\sigma_{AB}^{\otimes m}$ using the EACSS code with syndrome extraction circuit $V$, coherent implementations of the POVMs $\Lambda$ and $\Gamma$, and finally controlled-phase and inverse Fourier transform operations.}
\end{figure}

In the case of PDPC channels, the $\rho(z)$ operators all simultaneously commute, so a POVM based on a classical decoding operation is sufficient. Meanwhile, the $\sigma_{AB}(x)$ operators are unitarily equivalent to heralded pure states, specifically a pure state $Z_A^x\ket{\theta_y}_A$ for some value of $y$ which is stored in $B$. 
Employing polar codes ensures that both POVMs can be performed efficiently. The syndrome extraction circuit for any code uses at most $O(dm^2)$ controlled additions over $\ff_d$; for polar codes it uses $O(m\log m)$, since the entries of $\mathsf K_2$ are $0$ and $1$. 
The first POVM is based on the successive-cancellation decoder for polar codes~\cite{arikan_channel_2009,arikan_source_2010}, which uses $O(d^2m\log m)$ arithmetic operations. 
The second can be performed by $d$-ary BPQM~\cite{piveteau_quantum_2022,mandal_belief_2026} with $O(d^2m\log m)$ operations~\cite{piveteau_efficient_2025}, including its coherent implementation (Section~\ref{sec:implementation}).  

Moreover, they also ensure that the net rate of entanglement generation is equal to the coherent information. Polar codes achieve the information-theoretic limits, which in this case mean the required number of $Z$-type stabilizers will be roughly $mH(A|B)_\rho/\log_2 d$ (with entropy using base $2$), where $\rho_{AB}$ is the CQ state resulting from measuring $A$ in $\sigma_{AB}$ in the standard basis. 
Similarly, the required number of $X$-type stabilizers will be $mH(X|AB)_\tau/\log_2d $ for $\tau_{XAB}=\tfrac1d\sum_{x} \proj{x}_X\otimes \sigma_{AB}(x)$. 
Then the net rate of entanglement generation (measured in ebits) is $1-H(A|B)_\rho/\log_2 d-H(X|AB)_\tau/\log_2 d$, which as demonstrated in Lemma~\ref{lem:coherentinfocalc} is equal to $-H(A|B)_\sigma/\log_2 d$, the coherent information of the input state $\sigma_{AB}$. The full details are given in Theorem~\ref{thm:polar_scheme}. 

Applied to the pure loss Bosonic channel, we simply set the input state $\ket{\chi}_{AB}$ to be a truncated version of the two-mode squeezed state of some fixed average local energy $\bar n$. 
The state $\ket{\chi}$ has only an exponental loss in fidelity with the full squeezed state in the size $d$ of the truncation, and thus entropy continuity ensures that the entanglement generation rate is very close to the coherent information of the pure loss channel.

\subsection{Outline of the paper}
After giving some background including mathematical notations and some useful lemmas in the subsequent section, we give the precise details of the entanglement distillation procedure in Section~\ref{sec:entanglement_distillation}. Section~\ref{sec:polar} gives the details of the polar coding scheme, including the rate calculation. Finally, Section~\ref{sec:bpqm} discusses the efficient implementation of the POVM components of the decoder, particularly the connection to BPQM.

\section{Background}
\subsection{Mathematical preliminaries}
For a positive integer $m$, let $[m]\coloneqq\{1,\dots,m\}$. For prime $d$, $\ff_d$ denotes the finite field of integers modulo $d$. A $d$-ary string $s$ of length $\ell$ is denoted $s^\ell$ and is treated as a column vector in $\ff_d^\ell$. We often use bold symbols to denote vectors whose length is clear from context, and $\mathbf 0$ denotes the all-zero vector or the zero operator. We work with entropies and logarithms base $2$. 

For dimension $d$ set an orthonormal basis $\{\ket{z}\}_{z=0}^{d-1}$ and let $F$ be the $d$-ary Fourier transform
$F=\tfrac{1}{\sqrt{d}}\sum_{y,z=0}^{d-1}\omega^{yz}\ket{y}\bra{z}$,
where $\omega=e^{2\pi i/d}$. Define the Fourier conjugate basis $\{\ket{\widetilde x}=F\ket{x}\}_{x=0}^{d-1}$.

We denote the trace distance of two states by $\delta(\rho,\sigma)\coloneq \tfrac12\|\rho-\sigma\|_1$. 
We will also have occasion to denote the fidelity $F(\rho,\sigma)\coloneq \|\sqrt{\rho}\sqrt{\sigma}\|_1$ using `$F$', but it will be clear from the context whether this refers to the Fourier transform or the fidelity.


\subsection{Entanglement-assisted CSS codes}
Here we consider a special case; for more thorough background consult \cite{brun_correcting_2006,luo_nonbinary_2016}.
Suppose $d$ is prime and we have an $m$-qudit EACSS code $\cC$, specified as follows by non-negative integers $(k,\ell,s_x,s_z)$ with $k+\ell+s_x+s_z=m$ and an invertible matrix $H\in \ff_d^{m\times m}$. 
First define the syndrome extraction unitary $V=\sum_{z^m\in \ff_d^m} \ket{Hz^m}\bra{z^m}$, where $z^m$ in $Hz^m$ is interpreted as a column vector. 
When applied to the register $A^m$, we name the outputs of $V$ as follows. 
The first $k$ are denoted $\bar A$, the next $s_z$ $\widecheck A$, the next $s_x$ $\widehat A$, and the final $\ell$ $\underline{A}$. 
The stabilizers of the code correspond to single-system $X$- or $Z$-type operators at the output of $V$. 
The $Z$-type stabilizers correspond to single-system $Z$ operators in $\widecheck{A}$ and $\underline{A}$, while the $X$-type stabilizers correspond to single-system $X$ operators in $\widehat{A}$ and $\underline{A}$. 
Entanglement assistance is necessary to simultaneously measure all stabilizers, because those coming from $\underline{A}$ do not commute. 

The stabilizers of the code are generated by operators of the form $X^u$ and $Z^v$ for suitable $u,v\in \ff_d^m$, as follows. The $Z$-type stabilizers correspond to $V^\dagger Z^{e_j} V$ for particular $j$, where $e_j$ the length-$m$ row vector of zeros except a single 1 in the $j$th position. 
Since $V^\dagger Z^{e_j} V=Z^{e_jH}$, the possible $v$ are just the rows of $H$ which correspond to the subsets $\widecheck{A}$ and $\underline{A}$. Denote by $H_Z$ this subset of rows. 
For the $X$-type stabilizers, observe that $V$ can also be expressed in the Fourier basis using $G=(H^T)^{-1}\in \ff_d^{m\times m}$ as $V=\sum_{x^m\in \ff_d^m}\ket{\widetilde {Gx^m}}\bra{\widetilde{x^m}}$. 
An entirely similar argument then implies that the possible $u$ for the $X$-type stabilizers are just the rows of $G$ corresponding to subsets $\widehat{A}$ and $\underline{A}$. 
Denote by $G_X$ this subset of rows.

\subsection{Useful lemmas}
We will need the following coherent version of the gentle measurement lemma (see also~\cite[Exercise 9.4.2]{wilde_quantum_2017}).
\begin{lemma}[Coherent gentle measurement]
\label{lem:guessingfidelity}
Suppose $\{p_j,\varphi_B(j)\}_{j=1}^n$ is an ensemble of states and $\{\Lambda_B(j)\}_{j=1}^n$ is a POVM for which 
\begin{equation}
\sum_{j=1}^n p_j \tr[\Lambda_B(j)\varphi_B(j)]\geq 1-\eps\,.
\end{equation}
For $\ket{\varphi(j)}_{BE}$ a purification of $\varphi_B(j)$ for all $j$ and $V$ the following coherent implementation of $\Lambda$:
\begin{equation}
V:\ket{\psi}_{BE}\mapsto \sum_{j=1}^n \sqrt{\Lambda_B(j)}\ket{\psi}_{BE}\ket{j}_{A'}\,, 
\end{equation}
define the states
\begin{align}
\ket{\Psi}_{AA'BE}
&=\sum_{j=1}^n \sqrt{p_j} \ket{j}_{A}V\ket{\varphi(j)}_{BE}
=\sum_{j,j'=1}^n \sqrt{p_j} \ket{j}_A \sqrt{\Lambda_B(j')}\ket{\varphi(j)}_{BE} \ket{j'}_{A'}\\
\ket{\Psi'}_{AA'BE} 
&= \sum_{j=1}^n \sqrt{p_j} \ket{j}_A \ket{\varphi(j)}_{BE} \ket{j}_{A'}\,. 
\end{align} 
Then $\braket{\Psi'|\Psi}\geq 1-\eps$. 
\end{lemma}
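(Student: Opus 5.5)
The plan is to compute the overlap $\braket{\Psi'|\Psi}$ directly and then lower-bound it by the success probability using an operator inequality. Expanding both states in the orthonormal basis $\ket{j}_A\ket{j'}_{A'}$, the only surviving terms are those where the $A$ label of $\Psi'$ matches the $A$ label of $\Psi$ and the $A'$ labels agree. Since $\Psi'$ has $A'$ label equal to its $A$ label, we obtain
\begin{equation}
\braket{\Psi'|\Psi}=\sum_{j=1}^n p_j \bra{\varphi(j)}_{BE}\sqrt{\Lambda_B(j)}\otimes \mathbb 1_E\ket{\varphi(j)}_{BE}=\sum_{j=1}^n p_j \tr\big[\sqrt{\Lambda_B(j)}\,\varphi_B(j)\big],
\end{equation}
where the second equality uses that $\ket{\varphi(j)}_{BE}$ purifies $\varphi_B(j)$. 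In particular, the overlap is real and nonnegative.

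Next I will use the operator inequality $\sqrt{\Lambda}\geq \Lambda$ for $0\leq \Lambda\leq \mathbb 1$. This holds because $\Lambda$ and $\sqrt\Lambda$ commute and share eigenvectors, and on each eigenvalue $\lambda\in[0,1]$ we have $\sqrt\lambda\geq\lambda$. Combined with positivity of $\varphi_B(j)$, this gives $\tr[\sqrt{\Lambda_B(j)}\varphi_B(j)]\geq \tr[\Lambda_B(j)\varphi_B(j)]$ for each $j$. Summing with weights $p_j$ and invoking the hypothesis yields $\braket{\Psi'|\Psi}\geq 1-\eps$.

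I do not expect any real obstacle. The only point needing care is the bookkeeping in the first step: checking that the cross terms with $j'\neq j$ vanish by orthogonality of the $A'$ basis states, and that the $A$ register forces the same $j$ on both sides. It is also worth noting that $V$ is an isometry because $\sum_j \Lambda_B(j)=\mathbb 1$, so $\ket{\Psi}$ is normalized and the bound is meaningful as a fidelity. This normalization is not needed for the inequality itself.
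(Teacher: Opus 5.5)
Your proposal is correct and follows the paper's proof: you compute the overlap as $\sum_j p_j \bra{\varphi(j)}\sqrt{\Lambda_B(j)}\ket{\varphi(j)}$ and bound it below with the operator inequality $\sqrt{\Lambda_B(j)}\geq \Lambda_B(j)$, valid for $0\leq \Lambda_B(j)\leq I$. Together with the hypothesis, this gives $\braket{\Psi'|\Psi}\geq 1-\eps$.
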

\begin{proof}
The proof just uses the operator inequality $\sqrt{\Gamma}\geq \Gamma$, which holds for all $0\leq \Gamma\leq \id$:
\begin{align}
\braket{\Psi'|\Psi}
&=\sum_{j=1}^n p_k \bra{\varphi(j)}\sqrt{\Lambda(j)}\ket{\varphi(j)}\geq \sum_{j=1}^n p_k \bra{\varphi(j)}{\Lambda(j)}\ket{\varphi(j)}\geq 1-\eps\,. 
\end{align}

\end{proof}

The following gives a decomposition of the coherent information of a state, i.e.\ the negative quantum conditional entropy, into a sum of conditional entropies of CQ states. It will be useful to show that the constructed coding scheme achieves the coherent information. 
\begin{lemma}[Coherent information decomposition]
\label{lem:coherentinfocalc}
Consider an arbitrary state $\sigma_{AB}$ and define $\rho_{AB}=\sum_{k\in \ff_d} \proj{k}_A\sigma_{AB}\proj{k}_A$ and $\tau_{XAB}=\tfrac1d\sum_{x\in \ff_d}\proj{x}_X\otimes Z_A^x\sigma_{AB} Z_A^{-x}$. 
Then 
\begin{equation}
\label{eq:coherentinfocalc}
-H(A|B)_\sigma=\log_2 d-H(A|B)_\rho-H(X|AB)_{\tau}\,.
\end{equation}
\end{lemma}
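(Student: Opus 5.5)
Write $\mathcal D_Z(\cdot)=\tfrac1d\sum_x Z^x_A(\cdot)Z^{-x}_A$ for the completely dephasing channel on $A$. The identity $\mathcal D_Z(\sigma_{AB})=\rho_{AB}$ holds because $\tfrac1d\sum_x\omega^{x(k-k')}=\delta_{kk'}$. The plan is to expand each term of \eqref{eq:coherentinfocalc} in terms of von Neumann entropies and cancel.

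First, since $\tau_{XAB}$ is block diagonal in $X$ with uniform weights,
\begin{equation}
H(X|AB)_\tau=H(XAB)_\tau-H(AB)_\tau=\log_2 d+\tfrac1d\sum_x H(Z_A^x\sigma_{AB}Z_A^{-x})-H(AB)_\tau .
\end{equation}
Unitary invariance of the von Neumann entropy gives $H(Z_A^x\sigma_{AB}Z_A^{-x})=H(AB)_\sigma$. The marginal $\tau_{AB}=\mathcal D_Z(\sigma_{AB})=\rho_{AB}$, so
\begin{equation}
H(X|AB)_\tau=\log_2 d+H(AB)_\sigma-H(AB)_\rho .
\end{equation}

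Second, $H(A|B)_\rho=H(AB)_\rho-H(B)_\rho$. Dephasing on $A$ does not change the $B$ marginal, so $\rho_B=\sigma_B$.

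Substituting both expressions into the right-hand side of \eqref{eq:coherentinfocalc} gives
\begin{equation}
\log_2 d-H(AB)_\rho+H(B)_\sigma-\log_2 d-H(AB)_\sigma+H(AB)_\rho=H(B)_\sigma-H(AB)_\sigma=-H(A|B)_\sigma ,
\end{equation}
which is the claim.

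The one point needing care is the block-diagonal entropy formula $H(XAB)_\tau=\log_2 d+\tfrac1d\sum_x H(\sigma_{AB}(x))$. It holds because the blocks $\proj{x}_X\otimes\sigma_{AB}(x)$ have orthogonal supports, so the entropy splits into the Shannon entropy of the uniform label plus the average block entropy. Everything else is unitary invariance and marginal bookkeeping.
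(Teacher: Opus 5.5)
Your proof is correct and takes essentially the same route as the paper. The paper writes $H(X|AB)_\tau=H(AB|X)_\tau+H(X)_\tau-H(AB)_\tau$ and then uses the same facts you do: the conditional states are unitarily related, $X$ is uniform, $\tau_{AB}=\rho_{AB}$, and $\rho_B=\sigma_B$. Your block-diagonal formula for $H(XAB)_\tau$ is just that chain rule written out explicitly.
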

\begin{proof}
By chain rules we have $H(X|AB)_{\tau}=H(AB|X)_{\tau}+H(X)_{\tau}-H(AB)_{\tau}$. 
Because the conditional states of $AB$ given $X$ are all related by unitaries, $H(AB|X)_{\tau}=H(AB)_\sigma$. 
In the next term we have $H(X)_{\tau}=\log_2 d$ by construction.
And since the average over the $Z^x_A$ operators will dephase the $A$ part of the state, $H(AB)_{\tau}=H(AB)_\rho$. 
Meanwhile, $H(A|B)_\rho=H(AB)_\rho-H(B)_\rho=H(AB)_\rho-H(B)_\sigma$, since $\rho_B=\sigma_B$. 
Substituting these equalities into the righthand side of \eqref{eq:coherentinfocalc} gives $H(B)_\sigma-H(AB)_\sigma$, the lefthand side. 
\end{proof}


\subsection{Pure discrete phase covariant channels}

Consider a channel $\cN_B$ with Stinespring dilation of the form 
\begin{equation}
\label{eq:specialchannels}
    U_{BE}:\ket{x}_{B}\otimes \ket{0}_{E}\mapsto \sum_{y\in \ff_d}\alpha_{x,y}\ket{x-y}_{B}\otimes \ket{y}_E\,,
\end{equation}
for some $\alpha_{x,y}\in \mathbb C$ such that $\sum_{y\in \ff_d}|\alpha_{x,y}|^2=1$ for all $x\in \ff_d$. 
We call such channels pure discrete phase covariant channels. 
These are a subclass of discrete phase covariant channels which satisfy, by definition, $\cN(Z\rho Z^\dagger)=Z\cN(\rho)Z^\dagger$. 
The additional property here and the reason for the prefix ``pure'' is that there is only a single state in $E$ for a given value of $y$ (here, simply the state $\ket{y}$), or equivalently a single Kraus operator per value of $y$. 
They include amplitude damping channels of any dimension, dephasing channels, ReMAD channels, as well as truncated versions of the pure loss Bosonic channel and the pure amplifier channel with suitable inputs of strictly bounded energy. 
Note that, for degradable PDPC channels, due to phase covariance and the concavity of the coherent information, the input state $\ket{\chi}_{AB}$ which maximizes $-H(A|B)_\sigma$ for $\sigma_{AB}=\cN_B[\chi_{AB}]$ is of the form $\ket{\chi}_{AB}=\sum_{x\in \ff_d}\sqrt{p_x}\ket{x}_A\ket{x}_B$. 

For our purposes, PDPC channels have two important properties. 
First, inputs diagonal in the standard basis lead to commuting outputs: $[\cN[\proj{x}],\cN[\proj{x'}]]=0$ for all $x,x'\in \ff_d$. 
Let us express this is an equivalent, but more useful form. 
Set $\ket{\sigma}_{ABE}=U_{BE}\ket{\chi}_{AB}\ket{0}_E$ with $\ket{\chi}_{AB}=\sum_{k\in \ff_d}\sqrt{p_k}\ket{k}_{A}\ket{k}_B$. 
Then the states $\rho_B(k)\coloneq \bra{k}_A\sigma_{AB}\ket{k}_A$ for different $k\in \ff_d$ commute. 

For the second property, we need a few more definitions. 
Set the unitary $U'_{AB}:\ket{x}_A\ket{y}_B\mapsto \ket{x}_A\ket{x-y}_B$ and, for $q_y=\sum_{k\in \ff_d} p_k |\alpha_{k,y}|^2\neq 0$, define pure states $\ket{\theta_y}=1/\sqrt{q_y}\sum_{k\in \ff_d}\sqrt{p_k}\alpha_{k,y} \ket{k}$ .
Then the states $\sigma_{AB}(x)\coloneq Z_A^x \sigma_{AB}Z_{A}^{-x}$ have the form 
\begin{equation}
\sigma_{AB}(x)= Z^x_AU_{AB}'^*\Big(\sum_{y\in \ff_d} q_y  \proj{\theta_y}_A \otimes \proj{y}_B\Big) U_{AB}'Z_A^{-x}\,.
\end{equation}
To see this, first note that $[U'_{AB},Z_A]=0$. 
Then, we have 
\begin{align}
U'_{AB}\sigma_{AB}U_{AB}'^*
&=\sum_{j,k,y\in \ff_d}\sqrt{p_j p_{k}} \alpha_{j,y}^*\alpha_{k,y} U'_{AB}(\ket{k}\bra{j}_A\otimes \ket{k-y}\bra{j-y}_B)U_{AB}'^*\\
&=\sum_{j,k,y\in \ff_d}\sqrt{p_j p_{k}} \alpha_{j,y}^*\alpha_{k,y} \ket{k}\bra{j}_A\otimes \ket{y}\bra{y}_B=\sum_{y\in \ff_d} q_y \proj{\theta_y}_A\otimes \proj{y}_B\,. 
\end{align}

The pure-state channel (PSC) in this decomposition have the form $x\mapsto\proj{\psi_x}$ with $\ket{\psi_x}=Z^x\ket\theta$ for $x\in \ff_d$. 
We call such channels symmetric $d$-ary PSCs. 
For this channel, the Gram matrix $M$ is formed using the pairwise output state overlaps,
\begin{equation}
M_{x,x'}=\braket{\psi_x|\psi_{x'}}=\bra\theta Z^{x'-x}\ket\theta\,.
\end{equation}
Since each entry depends only on $x'-x$ modulo $d$, $M$ is circulant and is diagonalized by the Fourier transform $F$. We call the vector $\blambda=(\lambda_0,\ldots,\lambda_{d-1})$ defined by
\begin{equation}\label{eq:eigen_list_definition}
F^\dagger M F=\operatorname{diag}(\lambda_0,\ldots,\lambda_{d-1})
\end{equation}
the \emph{eigen list} of the channel~\cite{mandal_belief_2026}. Its entries are indexed by Fourier vectors. Because $M$ is positive semidefinite and has unit diagonal, $\lambda_j\geq0$ and $\sum_j\lambda_j=d$. 
For a symmetric $d$-ary PSC, the eigen list $\blambda$ provides a compact description of the channel.

\subsection{Pure loss Bosonic channel}
\paragraph{Definition and properties}
The pure loss channel $\cN$ for a Bosonic mode $B$ and transmissivity $\eta\in [0,1]$ is given by the following unitary operation on $B$ and an environment mode $E$
\begin{equation}\label{eq:pure_loss_stinespring}
    U_{BE}:\ket{n}_{B}\otimes \ket{0}_{E}\mapsto \sum_{\ell=0}^{n}\alpha_{n,\ell}\ket{n-\ell}_{B}\otimes \ket{\ell}_E,
\end{equation}
where $|\alpha_{n,\ell}|^{2}=\binom{n}{\ell}(1-\eta)^{\ell}\eta^{n-\ell}$, $n$ is a nonnegative integer, and $\ket{k}$ for nonnegative integer $k$ is the $k$th Fock state.

When using coding schemes with the constraint that the encoder never outputs a state with energy per mode  larger than $\bar n$, the quantum capacity $C_Q (\mathcal{N},\bar n)$ of the pure loss channel is zero for $\eta\leq 1/2$ and, for $1/2<\eta\leq 1$ given by~\cite{wilde_quantum_2012a,wilde_energyconstrained_2018}
\begin{equation}
C_Q (\mathcal{N},\bar n) = g(\eta \bar n) - g((1-\eta)\bar n)\,,
\end{equation}
where $g(x)\coloneq (x+1)\log_2 (x+1)-x \log_2 x$. 
Furthermore, $C_Q(\mathcal N,\bar n) = -H(A|B)_{\sigma'}$ for $\sigma'_{AB} = \mathcal{I}_A \otimes \mathcal{N}_B [\chi_{AB}^{(\bar{n})}]$, where $\chi_{AB}^{(\bar{n})}$ is a purification of the thermal state of average energy $\bar{n} >0$, i.e. the two-mode squeezed state
\begin{equation}
\ket{\chi^{(\bar n)}}_{AB} = \frac{1}{\sqrt{\bar n+1}}\sum_{n=0}^\infty \left(\frac{\bar n}{\bar{n}+1}\right)^{n/2} \ket{n}_A \otimes \ket{n}_B\,.
\end{equation}
\paragraph{Approximation by truncation}
For $d$ prime, let $P_A$ be the projection onto the lowest $d>\bar n$ energy levels in mode $A$, i.e.\ up to Fock state $\ket{d-1}$, and define $\ket{\gamma^{(\bar n)}}_{AB}=P_A\ket{\chi^{(\bar n)}}_{AB}/\sqrt{\bra{\chi^{(\bar n)}}P\ket{\chi^{(\bar n)}}}$. 
The following shows that the error in the conditional entropy decays exponentially in the cutoff $d$:
\begin{lemma}\label{lem:truncation}
Applied to $\ket{\gamma^{(\bar n)}}_{AB}$, the pure loss channel is a PDPC. 
Call the output $\sigma_{AB}$. With $r=\tfrac{\bar n}{\bar n+1}$,
\begin{equation}
-H(A|B)_{\sigma}\geq -H(A|B)_{\sigma'}- 4r^{d/2} g(\bar n/r^{d/2})-2h_2(r^{d/2})\,. 
\end{equation}
\end{lemma}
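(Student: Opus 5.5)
The argument has two parts: check that the truncated input gives a PDPC, then bound the entropy change by continuity.

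\emph{PDPC structure.} The state $\ket{\gamma^{(\bar n)}}_{AB}$ has the form $\sum_{k=0}^{d-1}\sqrt{p_k}\ket{k}_A\ket{k}_B$ with $p_k\propto r^k$. Because the loss unitary \eqref{eq:pure_loss_stinespring} never raises the photon number in $B$, the channel maps the span of $\ket{0},\dots,\ket{d-1}$ into itself. On this subspace, label the Fock states by elements of $\ff_d$ and set $\alpha_{x,y}=0$ whenever $y>x$. Then $\ket{x}\mapsto\sum_y\alpha_{x,y}\ket{x-y}\ket{y}$ with $x-y\ge 0$, so the subtraction modulo $d$ never wraps around and agrees with ordinary subtraction. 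This is exactly the form \eqref{eq:specialchannels}, so the channel restricted to this input is a PDPC.

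\emph{Entropy bound.} Compare $\sigma_{AB}=\cN_B[\gamma_{AB}]$ with $\sigma'_{AB}=\cN_B[\chi_{AB}]$. A direct computation gives $\bra{\chi}P_A\ket{\chi}=1-r^d$. Hence the fidelity between $\gamma$ and $\chi$ is $\sqrt{1-r^d}$, and the trace distance is $\sqrt{1-(1-r^d)}=r^{d/2}$. Since quantum channels do not increase trace distance, $\delta(\sigma,\sigma')\le \eps\coloneq r^{d/2}$.

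Continuity of conditional entropy in infinite dimensions requires an energy constraint, and we have one: both $\sigma_A$ and $\sigma'_A$ have mean photon number at most $\bar n$. For $\sigma'_A$ this holds because it is the thermal state. For $\sigma_A$ it holds because truncating a geometric distribution lowers its mean. The $B$ marginals also have mean energy at most $\eta\bar n\le\bar n$. I would therefore apply Winter's energy-constrained continuity bound for conditional entropy, with the energy constraint placed on the conditioned system $A$. In the form
\[
|H(A|B)_\sigma-H(A|B)_{\sigma'}|\le 2\eps\, g(\bar n/\eps) + \ldots,
\]
with the ellipsis containing a binary-entropy term, this bound involves only the $A$ energy, so it can be used directly. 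To make the constants come out as stated, I would use the version with prefactor $(2\eps+4\eps)$-type terms, or equivalently bound $H(AB)$ and $H(B)$ separately via Winter's energy-constrained Fannes inequality for entropy. Then $H(A|B)=H(AB)-H(B)$, and each of the two terms contributes $2\eps\, g(\bar n/\eps)+h_2(\eps)$. Summing the two contributions gives $4\eps\, g(\bar n/\eps)+2h_2(\eps)$, as claimed. For the joint system $AB$ this uses the total energy, which is at most $2\bar n$. I would absorb the resulting factor of $2$ by monotonicity of $g$, or note that $H(AB)_{\sigma'}=H(E)$ is governed by the environment energy $(1-\eta)\bar n\le\bar n$.

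\emph{Main obstacle.} The main obstacle is this continuity step: choosing the right energy-constrained bound, checking its hypotheses (the energy-to-$\eps$ ratio condition and the mean energies of both states), and matching its constants to $4r^{d/2}g(\bar n/r^{d/2})+2h_2(r^{d/2})$. Everything else is routine.
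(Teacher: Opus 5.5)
Your PDPC check and the estimate $\delta\le r^{d/2}$ are correct. Bounding two unconditional entropies separately with Winter's Lemma~15, which gives $2\eps\,g(\bar n/\eps)+h_2(\eps)$ per term, is also the right idea. The problem is the proposal hedges between routes, and the one you list first for the $H(AB)$ term fails.

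\textbf{The first route fails.} Since $g$ is increasing, $g(2\bar n/\eps)\geq g(\bar n/\eps)$, so monotonicity cannot absorb the factor of $2$. More to the point, for the two-mode Hamiltonian $N_A+N_B$ the maximal entropy at mean energy $E$ is $2g(E/2)$, not $g(E)$. Lemma~15 applied to $AB$ with $E\leq(1+\eta)\bar n$ therefore gives at best $4\eps\,g(\bar n/\eps)+h_2(\eps)$ for that term alone. The total becomes $6\eps\,g(\bar n/\eps)+2h_2(\eps)$, not the stated constant. Winter's conditional-entropy bound is not of the form $2\eps\,g(\bar n/\eps)+\ldots$ either. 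It carries auxiliary parameters $\eps'>\eps$ and $\delta=(\eps'-\eps)/(1+\eps')$, so it does not reproduce these constants.

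\textbf{Commit to your second option, which is exactly the paper's proof.} It needs two details you left implicit.
\begin{itemize}
\item Work with the pure Stinespring outputs $\sigma_{ABE}$ and $\sigma'_{ABE}$. Then $H(AB)=H(E)$ holds for both, and $-H(A|B)=H(B)-H(E)$ in each case.
\item You need the $E$ marginals to be close, and your bound on $\delta(\sigma_{AB},\sigma'_{AB})$ does not give this. Because $U_{BE}$ is an isometry, $\delta(\sigma_{ABE},\sigma'_{ABE})=\delta(\gamma,\chi)=r^{d/2}$. Monotonicity under partial trace then gives the same bound for both the $B$ and the $E$ marginals.
\end{itemize}
Truncation only lowers the input energy, so the $B$ marginals have mean photon number at most $\eta\bar n$ and the $E$ marginals at most $(1-\eta)\bar n$, both at most $\bar n$. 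Lemma~15, with monotonicity of $g$ now used in the correct direction, bounds each difference by $2r^{d/2}g(\bar n/r^{d/2})+h_2(r^{d/2})$. Summing the two gives the claimed inequality.
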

\begin{proof}
Computing the squared overlap of the two input states we find  
\begin{align}
\braket{\chi^{(\bar n)}|\gamma^{(\bar n)}}^2
&={\bra{\chi^{(\bar n)}}P\ket{\chi^{(\bar n)}}}
=\frac{1}{\bar n+1}\sum_{n=0}^{d-1} r^n
=\frac{1}{\bar n+1}\frac{1-r^d}{1-r}
=1-r^d\,.
\end{align}
Therefore, by the relation between fidelity and trace distance, $\delta(\sigma'_{ABE},\sigma_{ABE})\leq r^{d/2}$. 
The proof is completed by appealing to Winter's entropy bound for energy-constrained states, in particular Lemma 15 of \cite{winter_tight_2016}. 

To do so, first note that both the $B$ and $E$ marginals are also close by monotonicity of the trace distance. 
For pure $\sigma$ we have $-H(A|B)_\sigma=H(B)_\sigma-H(E)_\sigma$, and similarly for $\sigma'$. 
Therefore the mean energy of $B$ is at most $\eta\bar n$ for $\sigma$ and $\sigma'$ (since truncating will just move weight to lower energies), while for $E$ the mean energy is at most $(1-\eta)\bar n$ in both states. 
Therefore Lemma 15 implies $|H(B)_\sigma-H(B)_{\sigma'}|\leq 2 r^{d/2}g(\bar nr^{-d/2})+h_2(r^{d/2})$. 
Combining this with the same argument for the entropy of the $E$ marginals gives the desired statement. 
\end{proof}

\section{Entanglement distillation}
\label{sec:entanglement_distillation}
In this section we give the details of the entanglement distillation procedure.  
\subsection{Setup}

Consider using a $(k,\ell,s_x,s_z,H)$ EACSS code $\cC$ of the form above for entanglement distillation from a state 
$\sigma_{AB}^{\otimes m}$ with $d=\text{dim}(A)$ prime. 
Suppose that for $H$ and some orthonormal basis  $\{\ket{z}_A\}_{z\in \ff_d}$ we have a set of POVMs indexed by $y^{s_z+\ell}\in \ff_d^{s_z+\ell}$ with elements $\Lambda_{B^m}(z^m;y^{s_z+\ell})$ such that 
\begin{equation}
\sum_{z^m\in \ff_d^m} \tr[(\proj{z^m}_{A^m}\otimes \Lambda_{B^m}(z^m,H_Zz^m))\sigma_{AB}^{\otimes m}]\geq 1-\eps_Z\,. 
\end{equation}
Furthermore, suppose that we also have a set of POVMs $\Gamma'_{A'^m B^m}(x^m;y^{s_x+\ell})$ such that 
\begin{equation}
\sum_{x^m\in \ff_d^m}\tfrac1{d^m}\tr[\Gamma'_{A'^mB^m}(x^m;G_X x^m) \, Z_{A'^m}^{-x^m}\sigma_{A'B}^{\otimes m} Z_{A'^m}^{x^m}]\geq 1-\eps_X\,. 
\end{equation}
Here $Z=\sum_{k\in \ff_d}\omega^{k}\proj{k}$, i.e.\ using the basis $\{\ket{k}_{A'}\}$. 
Define $f(x^m)$ to be the first $k$ entries in $Gx^m$. 
Then construct a POVM 
\begin{equation}
\Gamma_{A^m B^m}(z^k;y^{s_x+\ell})\coloneq\sum_{x^m:f(x^m)=z^k} \Gamma'_{A^m B^m}(x^m;y^{s_x+\ell}).
\end{equation}
Note that this POVM correctly determines $f(x^m)$ with probability at least $1-\eps_X$:
\begin{equation}
\sum_{x^m\in \ff_d^m}\tfrac1{d^m}\tr[\Gamma_{A^mB^m}(f(x^m);G_X x^m) \, Z_{A^m}^{-x^m}\sigma_{AB}^{\otimes m} Z_{A^m}^{x^m}]\geq 1-\eps_X\,. 
\end{equation}
Now define $U_\Lambda$ to be the following coherent implementation of $\Lambda$, with an output into an additional register $A'^m$: 
\begin{equation}
U_{\Lambda}:\ket{w^s}_{\widecheck{A}}\ket{y^\ell}_{\underline{A}}\ket{\psi}_{B^m}\ket{0}_{A'^m} \to \sum_{x^m\in \ff_d^m}\ket{w^{s_z}}_{\widecheck{A}}\ket{y^\ell}_{\underline{A}}\sqrt{\Lambda_{B^m}(x^m,w^{s_z}||y^\ell)}\ket{\psi}_{B^m}\ket{x^m}_{A'^m}\,. 
\end{equation}
Here $(a||b)$ denotes the concatenation of strings $a$ and $b$. 
Similarly, let $U_\Gamma$ to be a coherent implementation of $\Gamma$ on $A'^mB^m$, with an output into an additional register $\bar R$, and $W$ the final controlled-$\bar Z$ and $F^\dagger$ on each qudit in $\bar A$. 
Then define $\ket{\Psi}_{A^mB^mE^mA'^m\bar R}=WU_\Gamma U_\Lambda V\ket{\sigma}_{BE}^{\otimes m}\ket{0}_{A'^m}\ket{0}_{\bar R}$. This setup is depicted in Figure~\ref{fig:entanglementgeneration}. 

\subsection{Quality of the structured decoder}
The main result is the following bound on the fidelity of the decoder output with the maximally entangled state. 
\begin{theorem}
\label{thm:edcircuit}
For any $\sigma_{AB}$ and choice of orthonormal basis $\{\ket{k}_A\}$, 
the fidelity of $\Psi_{\bar A\bar R}$ with the maximally entangled state $\ket{\Phi}_{\bar A\bar R}$ satisfies 
\begin{equation}
F(\Phi_{\bar A \bar R},\Psi_{\bar A \bar R})\geq 1-(\sqrt{\eps_X}+\sqrt{\eps_Z})^2\,.
\end{equation}
\end{theorem}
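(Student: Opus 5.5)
The plan is to compare the actual state $\ket{\Psi}$ with two idealized states. Write $\ket{\Psi}=WU_\Gamma U_\Lambda V\ket{\sigma}^{\otimes m}\ket{0}\ket{0}$. Consider the ideal state $\ket{\Psi_1}$, obtained by replacing $U_\Lambda$ with the perfect isometry $\ket{z^m}_{A^m}\mapsto\ket{z^m}_{A^m}\ket{z^m}_{A'^m}$ (acting through $V$, i.e.\ copying the standard-basis value of the original $A^m$). Next, consider $\ket{\Psi_2}$, in which $U_\Gamma$ is also replaced by its ideal version. The ideal version writes $f(x^m)$ into $\bar R$ whenever the state of $A'^mB^m$ is $Z^{-x^m}\sigma^{\otimes m}Z^{x^m}$ in the appropriate conjugate decomposition.

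The argument then has three steps.

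\emph{First}, expand $\ket{\sigma}^{\otimes m}_{ABE}=\sum_{z^m}\sqrt{p(z^m)}\ket{z^m}_{A^m}\ket{\varphi(z^m)}_{B^mE^m}$. After $V$, the registers $\widecheck A,\underline A$ hold $H_Zz^m$ in the standard basis. Lemma~\ref{lem:guessingfidelity}, applied with the ensemble indexed by $z^m$ and the POVM $\Lambda(\cdot;H_Zz^m)$ (the conditioning on the syndrome is just a control), gives $\braket{\Psi_1'|\Psi_1''}\ge 1-\eps_Z$. Here $\Psi_1''=U_\Lambda V\ket{\sigma}$ and $\Psi_1'$ is its ideal counterpart. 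Since the remaining operations are isometries, the purified distance is at most $\sqrt{1-(1-\eps_Z)^2}\le\sqrt{2\eps_Z}$. I would rather track $\|\ket{\Psi}-\ket{\Psi_1}\|$ directly. Because the overlap is real and $\ge 1-\eps_Z$, we get $\|\cdot\|^2=2-2\mathrm{Re}\braket{\cdot|\cdot}\le 2\eps_Z$.

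\emph{Second}, in the ideal copied state, rewrite $A^m$ in the Fourier basis: $\ket{z^m}_{A^m}=d^{-m/2}\sum_{x^m}\omega^{-x^m\cdot z^m}\ket{\widetilde{x^m}}$. Then, conditional on $x^m$ on $A^m$, the state of $A'^mB^mE^m$ is a purification of $Z_{A'}^{-x^m}\sigma^{\otimes m}_{A'B}Z^{x^m}_{A'}$, with uniform weight $d^{-m}$. Moreover, $V$ maps $\ket{\widetilde{x^m}}$ to $\ket{\widetilde{Gx^m}}$, so after the $F^\dagger$ gates on $\widehat A,\underline A$ those registers carry $G_Xx^m$ in the standard basis. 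This is what controls $U_\Gamma$, and the $F$ gates undo the change afterwards. Apply Lemma~\ref{lem:guessingfidelity} again with the ensemble indexed by $x^m$ and POVM $\Gamma(\cdot;G_Xx^m)$, using the $\eps_X$ guarantee. This yields $\|U_\Gamma\ket{\Psi_1}-\ket{\Psi_2}\|^2\le 2\eps_X$. By the triangle inequality, and since $U_\Gamma$ is an isometry, $\|\ket{\Psi}-\ket{\Psi_2}\|\le\sqrt{2\eps_Z}+\sqrt{2\eps_X}$. Hence the overlap satisfies $\mathrm{Re}\braket{\Psi_2|\Psi}\ge 1-(\sqrt{\eps_X}+\sqrt{\eps_Z})^2$.

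\emph{Third}, and this is the main obstacle, verify that $W\ket{\Psi_2}$ equals $\ket{\Phi}_{\bar A\bar R}\otimes\ket{\text{junk}}$. In $\ket{\Psi_2}$ the register $\bar A$ holds $\ket{\widetilde{f(x^m)}}$ and $\bar R$ holds $\ket{f(x^m)}$. There are also residual phases $\omega^{-x^m\cdot z^m}$ coupling to the copy $A'^m$. The controlled-$\bar Z$ (controlled by $\bar R$, acting on the logical part of $A'^m$) is exactly designed to cancel the part of this phase that depends on the logical value $f(x^m)$. This is the place where $G=(H^T)^{-1}$ enters, through $x^m\cdot z^m=(Gx^m)\cdot(Hz^m)$. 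The leftover phase depends only on the syndromes and is absorbed into the junk. The final $F^\dagger$ on $\bar A$ then yields $d^{-k/2}\sum_{z^k}\ket{z^k}\ket{z^k}$, possibly tensored with syndrome-dependent but logical-independent states.

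I would carefully check that the junk is independent of $z^k$. Given this, and since $W$ is unitary, the fidelity of the marginal is at least $|\braket{\Phi\otimes\text{junk}|W\Psi}|\ge\mathrm{Re}\braket{\Psi_2|\Psi}$ by Uhlmann's theorem. This gives the claimed bound.
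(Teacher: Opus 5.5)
Your proposal is correct and follows the paper's proof: two applications of Lemma~\ref{lem:guessingfidelity} around the Fourier-basis rewriting of the ideal copied state, followed by the factorization of the final ideal state into $\ket{\Phi}_{\bar A\bar R}\otimes\ket{\xi}$ via $x^m\cdot z^m=(Gx^m)\cdot(Hz^m)$. The only difference is how you combine the two errors: you use the Euclidean triangle inequality on state vectors, which is valid because both overlaps from the lemma are real and nonnegative, and this lands exactly on $1-(\sqrt{\eps_X}+\sqrt{\eps_Z})^2$; the paper instead uses the triangle inequality for $\arccos|\langle\cdot|\cdot\rangle|$, which gives the slightly stronger $\cos\bigl(\arccos(1-\eps_X)+\arccos(1-\eps_Z)\bigr)$ before it is relaxed to the stated bound.
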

\begin{proof}
Let $\ket{\sigma}_{ABE}$ be a purification of $\sigma_{AB}$ and define the pure states $\{\ket{\varphi_k}_{BE}\}_{k\in \ff_d}$ by $\sqrt{p_k}\ket{\varphi(k)}_{BE}={}_{A}\!\braket{k|\sigma}_{ABE}$. 
The state after the action of $V$ is 
\begin{equation}
\ket{\psi_1}\coloneq V\ket{\sigma}^{\otimes m}=\sum_{k^m} \sqrt{p_{k^m}} \ket{Hk^m}_{A^m} \ket{\varphi(k^m)}_{B^mE^m}\,.
\end{equation}
Applying $U_\Lambda$ with the ancilla $\ket{0}_{A'^m}$ gives 
\begin{equation}
\ket{\psi_2}\coloneq U_\Lambda\ket{\psi_1}=
\sum_{k^m,k'^m} \sqrt{p_{k^m}} \ket{Hk^m}_{A^m} \sqrt{\Lambda_{B^m}(k'^m,H_Zk^m)}\ket{\varphi(k^m)}_{B^mE^m}\ket{k'^m}_{A'^m}\,.
\end{equation}
By Lemma~\ref{lem:guessingfidelity}, $\braket{\psi_2'|\psi_2}\geq 1-\eps_Z$ for 
\begin{equation}
\ket{\psi_2'}\coloneq \sum_{k^m} \sqrt{p_{k^m}} \ket{Hk^m}_{A^m} \ket{\varphi(k^m)}_{B^mE^m}\ket{k^m}_{A'^m}\,.
\end{equation}

Continue with $\ket{\psi_2'}$. 
Expressing the $A^m$ system in the Fourier conjugate basis yields:
\begin{align}
\ket{\psi'_2}
&=\tfrac{1}{\sqrt{d^m}}\sum_{k^m,x^m} \sqrt{p_{k^m}}\ket{\widetilde{x^m}}_{A^m} \omega^{-x^m\cdot Hk^m}\ket{\varphi(k^m)}_{B^mE^m}\ket{k^m}_{A'^m}\\
&=\tfrac{1}{\sqrt{d^m}}\sum_{k^m,x^m} \sqrt{p_{k^m}}\ket{\widetilde{x^m}}_{A^m} \ket{\varphi(k^m)}_{B^mE^m}Z_{A'^m}^{-(x^m)^TH}\ket{k^m}_{A'^m}\\
&=\tfrac{1}{\sqrt{d^m}}\sum_{x^m} \ket{\widetilde{Gx^m}}_{A^m} Z_{A'^m}^{-x^m} \ket{\sigma}_{A'BE}^{\otimes m}\,.
\end{align} 
The Fourier transforms on the $\widehat{A}$ and $\underline{A}$ systems mean that subsequent control operations will extract the value of $G_Xx^m$. 
Then $U_\Gamma$ will yield
\begin{equation}
\ket{\psi_3'}\coloneq U_\Gamma\ket{\psi_2'}=\tfrac{1}{\sqrt{d^m}}\sum_{x^m,\bar{y}} \ket{\widetilde{Gx^m}}_{A^m} \sqrt{\Gamma_{A'^mB^m}(\bar{y},G_Xx^m)}\, Z_{A'^m}^{-x^m} \ket{\sigma}_{A'BE}^{\otimes m}\ket{\bar{y}}_{\bar R}\,. 
\end{equation}
Here we define $(\bar x,\check x, \hat x,\underline{x})=Gx^m$. 
By Lemma~\ref{lem:guessingfidelity}, $\braket{\psi_3''|\psi_3'}\geq 1-\eps_X$ for 
\begin{equation}
\ket{\psi_3''}\coloneq\tfrac{1}{\sqrt{d^m}}\sum_{x^m} \ket{\widetilde{Gx^m}}_{A^m} Z_{A'^m}^{-x^m} \ket{\sigma}_{A'BE}^{\otimes m}\ket{\bar{x}}_{\bar{R}}\,.
\end{equation}

Continuing with $\ket{\psi_3''}$, the controlled-$\bar Z$ from $\bar R$ to $A'^m$ removes the $\bar x$ dependence from the exponent; the most immediate way to accomplish this operation is to apply $V$ to $A'^m$ and then a controlled phase operation from $\bar R$ to $\bar A'$. 
Meanwhile, the inverse Fourier transform on $\bar A$ returns $\bar x$, meaning that the state $W\ket{\psi_3''}$ has the form $\ket{\Phi}_{\bar A\bar R}\otimes \ket{\xi}_{\widecheck{A}\widehat{A}\underline{A}B^mE^mA'^m}$ for a suitable $\ket{\xi}$. 

Now consider the fidelity of $W\ket{\psi''_3}$ with $WU_\Gamma U_\Lambda V\ket{\sigma}^{\otimes m}=WU_\Gamma \ket{\psi_2}$:
\begin{equation}
F(\ket{\Phi}_{\bar A\bar R}\ket{\xi},WU_\Gamma U_\Lambda V\ket{\sigma}^{\otimes m})
=|\bra{\psi_3''}U_\Gamma\ket{\psi_2}|.
\end{equation}
Applying the triangle inequality for fidelity (e.g.\ \cite[Proposition 10.5]{renes_quantum_2022}) gives 
\begin{align}
\arccos |\bra{\psi_3''}U_\Gamma\ket{\psi_2}| 
&\leq \arccos |\bra{\psi_3''}U_\Gamma\ket{\psi'_2}|+\arccos |\braket{\psi_2'|\psi_2}|\\
&\leq \arccos (1-\eps_X)+\arccos (1-\eps_Z)\,. 
\end{align}
Thus the fidelity of the $\bar A\bar R$ output with the maximally-entangled state is at least $1-(\sqrt{\eps_X}+\sqrt{\eps_Z})^2$. 
\end{proof}

\begin{lemma}\label{lem:resources}
The quantum operation $WU_\Gamma U_\Lambda V$ can be implemented by consuming $\ell$ entangled pairs and transmitting $s_x+s_z+2\ell$ dits of classical information. The net rate of entanglement distillation is therefore $\tfrac1m(k-\ell)\log_2 d$. 
\end{lemma}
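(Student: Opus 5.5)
We plan to realize the circuit of Figure~\ref{fig:entanglementgeneration} as a two-party protocol and then count resources. Alice holds $A^m$ and Bob holds $B^m$ together with the ancillas $A'^m$ and $\bar R$. The essential observation is that only Alice's output registers enter Bob's operations. Moreover, they do so only as \emph{controls}: $\widecheck A$ and $\underline A$ in the standard basis for $U_\Lambda$, and $\widehat A$ and $\underline A$ in the Fourier basis for $U_\Gamma$. After these uses the registers $\widecheck A$, $\widehat A$, $\underline A$ are discarded. Alice applies $V$ locally.

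\textbf{The commuting registers.} Since $\widecheck A$ is used only as a standard-basis control and then discarded, measuring it in the standard basis before the control leaves the reduced state on the remaining systems unchanged. A control followed by discarding is equivalent to a measurement followed by a classically controlled operation. Hence Alice measures $\widecheck A$ in the standard basis and sends $s_z$ dits, and Bob applies the corresponding classically controlled $U_\Lambda$ branch. The same reasoning applies to $\widehat A$, which is used only inside the $F^\dagger\cdot F$ sandwich as a Fourier-basis control. Alice measures it in the Fourier basis and sends $s_x$ dits. The subtle point is commutation: the $\widehat A$ measurement acts on a register on which $U_\Lambda$ does not act, so measuring at the outset is legitimate. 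In total this costs $s_x+s_z$ dits and no entanglement.

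\textbf{The noncommuting registers.} The register $\underline A$ is used as a control in both conjugate bases, so it cannot be measured. Instead Alice teleports its $\ell$ qudits to Bob using qudit teleportation over $\ff_d$. This consumes $\ell$ maximally entangled pairs and $2\ell$ dits, namely one $X$-shift value and one $Z$-shift value per qudit. Bob then performs $U_\Lambda$, the Fourier-conjugated control of $U_\Gamma$, and finally discards $\underline A$ locally.

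\textbf{Remaining steps and rate.} The controlled-$\bar Z$ in $W$ acts from $\bar R$ to $A'^m$, both of which Bob holds. The final $F^\dagger$ on $\bar A$ is applied by Alice. Thus $W$ requires no communication, and the output $\bar A\bar R$ is shared between Alice and Bob as required. The protocol yields $k$ qudit pairs, which is $k\log_2 d$ ebits, while consuming $\ell\log_2 d$ ebits, over $m$ uses. The net rate is therefore $\tfrac1m(k-\ell)\log_2 d$.

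We expect the main obstacle to be justifying rigorously that replacing coherent controls by measurements followed by classical controls, performed at the start, reproduces exactly the state $\Psi_{\bar A\bar R}$ analysed in Theorem~\ref{thm:edcircuit}. This follows from the principle of deferred measurement together with the facts that each of $\widecheck A$ and $\widehat A$ is touched only by controls in a single basis and is then traced out.
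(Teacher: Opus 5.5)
Your proposal is correct and follows the same route as the paper. Because $\widecheck A$ and $\widehat A$ are used only as single-basis controls and then discarded, they can be measured (in the standard and Fourier bases, respectively) with the $s_z+s_x$ outcomes sent classically, while $\underline A$ must be teleported at a cost of $\ell$ pairs and $2\ell$ dits; your remarks on commuting the $\widehat A$ measurement past $U_\Lambda$ and on $W$ being local fill in details the paper's two-line proof leaves implicit.
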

\begin{proof}
Since the $\widecheck{A}$ and $\widehat{A}$ registers will anyway be discarded immediately after the single control operations, we may replace those control operations by measurement and classical communication. 
On the other hand, register $\underline{A}$ needs to be transmitted to Bob via teleportation, at a cost of $\ell$ entangled pairs. 
\end{proof}

\section{Polar Coding Scheme}\label{sec:polar}

In~\cite{renes_polar_2014}, quantum polar codes for qubit channels are obtained from CSS-type codes whose matrix is a polar transform. We follow this construction and choose the EACSS code of Theorem~\ref{thm:edcircuit} to be a polar code over $\ff_d$.
By Theorem~\ref{thm:edcircuit} and Lemma~\ref{lem:resources}, it suffices to specify three things. These are the matrix $H$, the partition of $[m]$ into the registers $\bar A,\widecheck A,\widehat A,\underline A$ and two decoders whose average block error probabilities vanish.
The amplitude information is the outcome $k^m$ of a measurement of Alice's registers $A^m$ in the standard basis. The phase information is the outcome $x^m$ of a measurement of $A^m$ in the Fourier conjugate basis. After $V$, the registers $A^m$ hold $Hk^m$ in the standard basis and $Gx^m$ in the Fourier basis.

\subsection{Polar transform}
Let $m$ be a power of $2$ and $\mathsf K_m=\mathsf B_m\mathsf K_2^{\otimes\log m}$ over $\ff_d$. Here $\mathsf K_2=\left(\begin{smallmatrix}1&0\\1&1\end{smallmatrix}\right)$ and $\mathsf B_m$ is the bit reversal permutation.
We take $H=\mathsf K_m^T$, so that $G=(H^T)^{-1}=\mathsf K_m^{-1}$.
The transformed amplitude information $u^m=Hk^m$ and the transformed phase information $v^m=Gx^m$ then satisfy
\begin{equation}\label{eq:polar_uv}
(u^m)^T=(k^m)^T\mathsf K_m\,,\qquad (x^m)^T=(v^m)^T\mathsf K_m^T\,.
\end{equation}
Let $J=\left(\begin{smallmatrix}0&1\\1&0\end{smallmatrix}\right)$. Then $J\mathsf K_2J=\mathsf K_2^T$, and $\mathsf B_m$ commutes with $J^{\otimes\log m}$ and $\mathsf K_2^{\otimes\log m}$. Hence $\mathsf K_m^T$ is $\mathsf K_m$ with the order of rows and columns reversed.
As in the qubit construction, the polar code for the phase information has the direction of all gates flipped. Its $i$th synthesized channel has the later values $v_{i+1}^m$ as side information, and the transformed phase information is decoded in the order $i=m,\dots,1$.
For $d=2$, we have $\mathsf K_m^{-1}=\mathsf K_m$ and the scheme reduces to the qubit construction.

\subsection{Amplitude and phase channels}
The amplitude CQ channel $W_A$ maps the amplitude information $k$ of one copy to the normalized state $W_A(k)=\rho_B(k)/p_k$ of $B$. This is the state $\varphi_B(k)$ in the proof of Theorem~\ref{thm:edcircuit}. Its input distribution is $\{p_k\}$ and its joint CQ state is $\rho_{AB}$ of Lemma~\ref{lem:coherentinfocalc}. By the first property of PDPC channels, its output states commute.
The phase CQ channel $W_P$ maps the phase information $x$ of one copy to the state $W_P(x)=\sigma_{A'B}(x)=Z_{A'}^x\sigma_{A'B}Z_{A'}^{-x}$ of Bob's systems $A'B$. Here $A'$ holds the decoded amplitude information and takes the role of $A$ in $\sigma_{AB}$. Its input distribution is uniform and its joint CQ state is $\tau_{XAB}$ of Lemma~\ref{lem:coherentinfocalc} with $A'$ in place of $A$.
By the second property of PDPC channels, the unitary $U'_{A'B}$ maps $W_P$ to the heralded mixture $x\mapsto\sum_yq_y\,Z_{A'}^x\proj{\theta_y}_{A'}Z_{A'}^{-x}\otimes\proj{y}_B$ of symmetric $d$-ary PSCs. The herald $y$ is stored in $B$ and has probability $q_y$. For the pure-loss channel, $y$ is the number of lost photons.

\subsection{Reliability parameter}
Let $W:u\mapsto\rho_u$ be a CQ channel with input $u\in\ff_d$, output system $Y$ and input distribution $\pi$. Consider the joint CQ state $\rho_{UY}=\sum_{u\in\ff_d}\pi(u)\proj{u}_U\otimes\rho_u$.
The pairwise fidelity of $W$ is
\begin{equation}\label{eq:fidelity_param}
F(W)=\frac1{d-1}\sum_{u\neq u'}\sqrt{\pi(u)\pi(u')}\,F(\rho_u,\rho_{u'})\,,\qquad F(\rho,\tau)=\big\|\sqrt\rho\sqrt\tau\big\|_1\,,
\end{equation}
where the sum is over ordered pairs.
It lies in $[0,1]$. It is $0$ if and only if the $\rho_u$ with $\pi(u)>0$ are perfectly distinguishable. It equals $1$ if and only if all $\rho_u$ coincide and $\pi$ is uniform.
The minimum discrimination error of $W$ is
\begin{equation}\label{eq:perr_def}
P_{\rm err}(W)=\min_{\{M_u\}_{u\in\ff_d}}\sum_{u\in\ff_d}\pi(u)\tr\big[(I-M_u)\rho_u\big]\,,
\end{equation}
where the minimum is over all POVMs $\{M_u\}_{u\in\ff_d}$ on $Y$.

Suppose that the output states $\rho_u=\sum_yP(y|u)\proj{y}$ commute. Then $W$ is a classical channel and $F(\rho_u,\rho_{u'})=\sum_y\sqrt{P(y|u)P(y|u')}$. In this case $F(W)$ equals the $d$-ary Bhattacharyya parameter
\begin{equation*}
Z(W)=\frac1{d-1}\sum_{u\neq u'}\sum_y\sqrt{P_{UY}(u,y)P_{UY}(u',y)}\,,\qquad P_{UY}(u,y)=\pi(u)P(y|u)\,.
\end{equation*}
In~\cite{mori_source_2014}, $Z(W)$ is used to analyze source and channel polarization over finite fields. For classical channels, $P_{\rm err}(W)\leq(d-1)Z(W)$~\cite[Lemma~22]{mori_source_2014}. The minimum in~\eqref{eq:perr_def} is attained by the maximum a posteriori decision. It equals $P_{\rm err}(W)=1-\sum_y\max_{u}P_{UY}(u,y)$.

Suppose instead that $W$ is a heralded mixture of symmetric $d$-ary PSCs with uniform $\pi$. That is, $W(u)=\sum_hq_h\,W_h(u)\otimes\proj{h}$ where each $W_h\colon u\mapsto\proj{\psi_u^h}$ is a symmetric $d$-ary PSC.
Then every $\rho_u$ is block diagonal in $h$. As a result, a measurement that first obtains the herald index $h$ is optimal and
\begin{equation}\label{eq:heralded_split}
P_{\rm err}(W)=\sum_hq_h\,P_{\rm err}(W_h)\,,\qquad F(W)=\sum_hq_h\,F(W_h)\,.
\end{equation}
The states $\{\ket{\psi_u^h}\}_{u\in\ff_d}$ of each $W_h$ are pure and geometrically uniform. For such states, the pretty good measurement attains $P_{\rm err}(W_h)$. 

In both cases,
\begin{equation}\label{eq:perr_bound}
P_{\rm err}(W)\leq(d-1)F(W)\,.
\end{equation}
For classical channels, this is the bound stated above. For heralded mixtures of symmetric PSCs, the error bound for the pretty good measurement in~\cite{barnum_reversing_2002} gives $P_{\rm err}(W_h)\leq(d-1)F(W_h)$ for every $h$. Averaging over $h$ with~\eqref{eq:heralded_split} gives~\eqref{eq:perr_bound}.

\subsection{Synthesized channels and speed of polarization}\label{sec:synth}
Let $K^m$ and $X^m$ be the amplitude and phase information of $A^m$, with $K^m$ distributed as $p_{k^m}=\prod_{t=1}^mp_{k_t}$ and $X^m$ uniform on $\ff_d^m$. Let $U^m=HK^m$ and $V^m=GX^m$.

The $i$th synthesized amplitude channel $W_{A,m}^{(i)}\colon u\mapsto\rho^{(i)}_{A,u}$ has input $U_i$ and output system $U^{i-1}B^m$. Its output state $\rho^{(i)}_{A,u}$ is the state of $U^{i-1}B^m$ conditioned on $U_i=u$, and its input distribution is $\pi_i(u)=\Pr[U_i=u]$.
Given $K^m=k^m$, the registers $U^m$ hold $Hk^m$ and $B^m$ is in the state $\varphi_B(k^m)=W_A^{\otimes m}(k^m)$. Summing over the amplitude information $k^m$ with $(Hk^m)_i=u$ gives
\begin{equation}\label{eq:synth_amp}
\rho^{(i)}_{A,u}=\frac1{\pi_i(u)}\sum_{k^m:\,(Hk^m)_i=u}p_{k^m}\,\proj{(Hk^m)^{i-1}}\otimes\varphi_B(k^m)\,,\qquad \pi_i(u)=\sum_{k^m:\,(Hk^m)_i=u}p_{k^m}\,,
\end{equation}
where $(Hk^m)^{i-1}$ is the vector of the first $i-1$ entries of $Hk^m$.
The state $W_A(k)$ is diagonal in the standard basis $\{\ket{b}\}_{b\in\ff_d}$ of $B$, so measuring $B$ in this basis loses no information about $K$. With this measurement, $K$ is a classical source with side information $b$ and joint distribution $P_{KB}(k,b)=p_kW_A(b|k)$, where $W_A(b|k)=|\alpha_{k,k-b}|^2$ with the subtraction modulo $d$. For the pure-loss channel, $b$ is the photon number at the output. The transform $U^m=HK^m$ is the source polarization transform of $K^m$ with kernel $\mathsf K_2$ over $\ff_d$~\cite{arikan_source_2010,mori_source_2014}. Each $W_{A,m}^{(i)}$ is then a classical channel, namely the $i$th synthesized source with side information $(u^{i-1},b^m)$, and $F(W_{A,m}^{(i)})=Z(W_{A,m}^{(i)})$ is its Bhattacharyya parameter. For the kernel $\mathsf K_2$, the Bhattacharyya parameters of the reliable synthesized sources decay as $2^{-m^\beta}$ for every $0<\beta<\tfrac12$. The following limit holds for every $0<\beta<\tfrac12$~\cite[Theorem~31 and Corollary~28]{mori_source_2014}:
\begin{equation}\label{eq:amp_polarization}
\lim_{m\to\infty}\frac1m\Big|\Big\{i\in[m]:Z\big(W_{A,m}^{(i)}\big)<2^{-m^\beta}\Big\}\Big|=1-\frac{H(A|B)_\rho}{\log d}\,.
\end{equation}
Here $H(A|B)_\rho$ is the conditional entropy of the amplitude information given $B$. It can be computed in the standard basis of $B$ because the outputs of $W_A$ commute.

The $i$th synthesized phase channel $W_{P,m}^{(i)}\colon v\mapsto\rho^{(i)}_{P,v}$ has input $V_i$ and output system $V_{i+1}^mA'^mB^m$. Its output state $\rho^{(i)}_{P,v}$ is the state of $V_{i+1}^mA'^mB^m$ conditioned on $V_i=v$, and its input distribution is uniform.
Given $X^m=x^m$, the registers $V^m$ hold $Gx^m$ and $A'^mB^m$ is in the state $W_P^{\otimes m}(x^m)=Z_{A'^m}^{x^m}\sigma_{A'B}^{\otimes m}Z_{A'^m}^{-x^m}$. Summing over the phase information $x^m$ with $(Gx^m)_i=v$ gives
\begin{equation}\label{eq:synth_phase}
\rho^{(i)}_{P,v}=\frac1{d^{m-1}}\sum_{x^m:\,(Gx^m)_i=v}\proj{(Gx^m)_{i+1}^m}\otimes Z_{A'^m}^{x^m}\sigma_{A'B}^{\otimes m}Z_{A'^m}^{-x^m}\,,
\end{equation}
where $(Gx^m)_{i+1}^m$ is the vector of the last $m-i$ entries of $Gx^m$.
After the order of the indices of both $x^m$ and $v^m$ is reversed, the transform $(x^m)^T=(v^m)^T\mathsf K_m^T$ becomes the standard transform $(x^m)^T=(v^m)^T\mathsf K_m$ of a polar channel code. Reversing the order of $x^m$ does not change $W_P^{\otimes m}$. The $W_{P,m}^{(i)}$ are the synthesized channels of the standard polar code with the index order reversed.
Polarization of CQ channels over prime fields with the parameter $F$ is proved in~\cite{nasser_polar_2018}. The following limit holds for every $0<\beta<\tfrac12$~\cite[Theorem~1]{nasser_polar_2018}:
\begin{equation}\label{eq:phase_polarization}
\lim_{m\to\infty}\frac1m\Big|\Big\{i\in[m]:F\big(W_{P,m}^{(i)}\big)<2^{-m^\beta}\Big\}\Big|=\frac{I(W_P)}{\log d}=1-\frac{H(X|AB)_\tau}{\log d}\,.
\end{equation}
Here $I(W_P)=\log d-H(X|AB)_\tau$ is the symmetric Holevo information of $W_P$.
The unitary $U'_{A'B}$ on Bob's systems maps $W_P$ to a heralded mixture of symmetric $d$-ary PSCs. In~\cite[Theorem~19]{mandal_quantum_2026a}, heralded mixtures of symmetric $d$-ary PSCs are shown to be closed under the check-node and bit-node channel combinations that build the synthesized channels of a polar code. It follows that each $W_{P,m}^{(i)}$ is unitarily equivalent to such a heralded mixture by a unitary on Bob's systems. Its herald consists of $v_{i+1}^m$, the heralds $y_t$ of the channel outputs and the heralds produced by the channel combinations.
Unitary equivalence preserves $F$ and $P_{\rm err}$, so~\eqref{eq:perr_bound} applies to both $W_{A,m}^{(i)}$ and $W_{P,m}^{(i)}$.

\subsection{Index sets and EACSS registers}
Fix $0<\beta<\tfrac12$, so that $m2^{-m^\beta}\to0$. The index $i$ is reliable for the amplitude information if the transformed amplitude information $u_i$ is nearly determined by $u^{i-1}$ and $B^m$. It is reliable for the phase information if the transformed phase information $v_i$ is nearly determined by $v_{i+1}^m$, $A'^m$ and $B^m$. The corresponding sets are defined by
\begin{equation}\label{eq:reliable_sets}
\mathcal G_A \coloneqq \big\{i\in[m]:Z\big(W_{A,m}^{(i)}\big)<2^{-m^\beta}\big\}\,,\qquad \mathcal G_P\coloneqq \big\{i\in[m]:F\big(W_{P,m}^{(i)}\big)<2^{-m^\beta}\big\}\,.
\end{equation}
They are the good sets for the amplitude and phase information. The corresponding bad sets are
\begin{equation}\label{eq:bad_sets}
\mathcal B_A\coloneqq[m]\setminus\mathcal G_A\,,\qquad \mathcal B_P\coloneqq[m]\setminus\mathcal G_P\,.
\end{equation}
By~\eqref{eq:perr_bound}, $P_{\rm err}(W_{A,m}^{(i)})\leq(d-1)2^{-m^\beta}$ for every $i\in\mathcal G_A$ and $P_{\rm err}(W_{P,m}^{(i)})\leq(d-1)2^{-m^\beta}$ for every $i\in\mathcal G_P$.
By~\eqref{eq:amp_polarization} and~\eqref{eq:phase_polarization},
\begin{equation}\label{eq:set_sizes}
\lim_{m\to\infty}\frac{|\mathcal B_A|}{m}=\frac{H(A|B)_\rho}{\log d}\,,\qquad \lim_{m\to\infty}\frac{|\mathcal B_P|}{m}=\frac{H(X|AB)_\tau}{\log d}\,.
\end{equation}

The two sets partition $[m]$ into the four registers of the EACSS code:
\begin{equation}\label{eq:registers}
\bar A=\mathcal G_A\cap\mathcal G_P\,,\qquad \widecheck A=\mathcal B_A\cap\mathcal G_P\,,\qquad \widehat A=\mathcal G_A\cap\mathcal B_P\,,\qquad \underline A=\mathcal B_A\cap\mathcal B_P\,.
\end{equation}
These are interleaved subsets of $[m]$. A permutation of the outputs of $V$ puts them in the order of the EACSS code without changing the code. The parameters of the code are $k=|\bar A|$, $s_z=|\widecheck A|$, $s_x=|\widehat A|$ and $\ell=|\underline A|$.
The frozen set of the transformed amplitude information is $\cF_Z=\mathcal B_A=\widecheck A\cup\underline A$. The rows of $H$ indexed by $\cF_Z$ form $H_Z$, so the $Z$-type syndrome $H_Zk^m=u_{\cF_Z}$ is the transformed amplitude information on the unreliable indices.
The frozen set of the transformed phase information is $\cF_X=\mathcal B_P=\widehat A\cup\underline A$. The rows of $G$ indexed by $\cF_X$ form $G_X$, so the $X$-type syndrome $G_Xx^m=v_{\cF_X}$ is the transformed phase information on the unreliable indices.
Since the four registers partition $[m]$,
\begin{equation}\label{eq:net_count}
|\bar A|-|\underline A|=m-|\cF_Z|-|\cF_X|\,.
\end{equation}

\subsection{Decoding the amplitude information}
The frozen values $u_{\cF_Z}=H_Zk^m$ of the transformed amplitude information are not fixed in advance. They are random with the distribution induced by $p_{k^m}$, and they form the syndrome of a polar source code for the amplitude information $K^m$ with side information $B^m$.
By~\eqref{eq:set_sizes}, the syndrome consists of $|\cF_Z|$ symbols with $|\cF_Z|/m\to H(A|B)_\rho/\log d$. This is the rate of source coding with quantum side information~\cite{devetak_classical_2003}.
The nonuniform distribution of $K$ is supplied by the state $\ket{\sigma}$, so the source code needs no input shaping.

A decoder for this source code is a family of POVMs $\{\Lambda^{(y)}_{w}\}_{w}$ on $B^m$, one for each frozen value $y\in\ff_d^{|\cF_Z|}$, with outcomes $w\in\ff_d^{|\mathcal G_A|}$ that estimate the transformed amplitude information on $\mathcal G_A$. Its average error probability is
\begin{equation}\label{eq:PeZ_def}
\bar P_{e,Z}=\sum_{k^m}p_{k^m}\tr\big[\big(I-\Lambda^{(H_Zk^m)}_{(Hk^m)_{\mathcal G_A}}\big)\varphi_B(k^m)\big]\,.
\end{equation}
The decoder defines the POVMs of Theorem~\ref{thm:edcircuit} by
\begin{equation}\label{eq:Lambda_from_polar}
\Lambda_{B^m}(k^m;y)=\begin{cases}\Lambda^{(y)}_{(Hk^m)_{\mathcal G_A}} & H_Zk^m=y\,,\\ \mathbf 0&\text{otherwise.}\end{cases}
\end{equation}
Since $H$ is invertible, $k^m\mapsto(Hk^m)_{\mathcal G_A}$ is a bijection from $\{k^m:H_Zk^m=y\}$ onto $\ff_d^{|\mathcal G_A|}$. It follows that $\{\Lambda_{B^m}(k^m;y)\}_{k^m}$ is a POVM for every $y$ and
\begin{equation}\label{eq:epsZ}
\sum_{k^m}p_{k^m}\tr\big[\Lambda_{B^m}(k^m;H_Zk^m)\varphi_B(k^m)\big]=1-\bar P_{e,Z}\,.
\end{equation}
This is the hypothesis of Theorem~\ref{thm:edcircuit} on the POVMs $\Lambda_{B^m}(k^m;y)$ with $\eps_Z=\bar P_{e,Z}$.


The successive-cancellation decoder decides $u_1,\dots,u_m$ in this order. For $i\in\cF_Z$ it sets $\hat u_i=u_i$. For $i\in\mathcal G_A$ it takes the maximum a posteriori estimate of $u_i$ given $\hat u^{i-1}$ and the outcomes $b^m$ of a measurement of $B^m$ in the standard basis. The posterior follows from $u^m=Hk^m$ and the probability that the amplitude information is $k^m$ and this measurement gives $b^m$,
\begin{equation}\label{eq:amp_joint}
p_{k^m}\bra{b^m}\varphi_B(k^m)\ket{b^m}=\prod_{t=1}^mp_{k_t}W_A(b_t|k_t)\,,
\end{equation}
with $W_A(b|k)$ as defined in Section~\ref{sec:synth}. The successive-cancellation decoder starts from the probabilities $p_kW_A(b_t|k)$ for $k\in\ff_d$ and $t\in[m]$. In this case, the decoder uses the prior $p_k$ due to the fact that $K$ is not uniform.
The resulting POVM elements are projectors onto sets of strings $b^m$. They are diagonal in the standard basis, and $U_\Lambda$ is the reversible computation of the estimate $\hat k^m$ from $b^m$ and $u_{\cF_Z}$.
The decoder error only happens if a decision on some $i\in\mathcal G_A$ is wrong while all earlier decisions are correct. The union bound over $i\in\mathcal G_A$ and~\eqref{eq:perr_bound} give
\begin{equation}\label{eq:PeZ_bound}
\bar P_{e,Z}\leq\sum_{i\in\mathcal G_A}P_{\rm err}\big(W_{A,m}^{(i)}\big)\leq(d-1)\sum_{i\in\mathcal G_A}Z\big(W_{A,m}^{(i)}\big)\leq(d-1)\,m\,2^{-m^\beta}\,.
\end{equation}

\subsection{Decoding the phase information}
The frozen values $v_{\cF_X}=G_Xx^m$ of the transformed phase information are uniformly distributed because $X^m$ is uniform and $G$ is invertible. They form the frozen part of a polar channel code for $W_P$ with uniform input, where Bob holds $A'^m$ and $B^m$ as channel output.
By~\eqref{eq:set_sizes}, there are $|\cF_X|$ frozen values with $|\cF_X|/m\to H(X|AB)_\tau/\log d$.

A decoder for this channel code is a family of POVMs $\{\Gamma^{(y)}_{w}\}_{w}$ on $A'^mB^m$, one for each frozen value $y\in\ff_d^{|\cF_X|}$, with outcomes $w\in\ff_d^{|\mathcal G_P|}$ that estimate the transformed phase information on $\mathcal G_P$. Its average error probability is
\begin{equation}\label{eq:PeX_def}
\bar P_{e,X}=\sum_{x^m}\frac1{d^m}\tr\big[\big(I-\Gamma^{(G_Xx^m)}_{(Gx^m)_{\mathcal G_P}}\big)W_P^{\otimes m}(x^m)\big]\,.
\end{equation}
The decoder defines the POVMs of Theorem~\ref{thm:edcircuit} by
\begin{equation}\label{eq:Gamma_from_polar}
\Gamma'_{A'^mB^m}(x^m;y)=\begin{cases}\Gamma^{(y)}_{(Gx^m)_{\mathcal G_P}} & G_Xx^m=y\,,\\ \mathbf 0&\text{otherwise.}\end{cases}
\end{equation}
Since $G$ is invertible, $x^m\mapsto(Gx^m)_{\mathcal G_P}$ is a bijection from $\{x^m:G_Xx^m=y\}$ onto $\ff_d^{|\mathcal G_P|}$. It follows that $\{\Gamma'_{A'^mB^m}(x^m;y)\}_{x^m}$ is a POVM for every $y$ and
\begin{equation}\label{eq:epsX}
\sum_{x^m}\frac1{d^m}\tr\big[\Gamma'_{A'^mB^m}(x^m;G_Xx^m)\,W_P^{\otimes m}(x^m)\big]=1-\bar P_{e,X}\,.
\end{equation}
This is the success probability of $\Gamma'$ on the states $W_P^{\otimes m}(x^m)$. Theorem~\ref{thm:edcircuit} requires it for the states $Z_{A'^m}^{-x^m}\sigma_{A'B}^{\otimes m}Z_{A'^m}^{x^m}=W_P^{\otimes m}(-x^m)$. Since $x^m\mapsto-x^m$ is a bijection of $\ff_d^m$ and $G_X(-x^m)=-G_Xx^m$, the POVMs $\Gamma'_{A'^mB^m}(-x^m;-y)$ satisfy
\begin{equation*}
\sum_{x^m}\frac1{d^m}\tr\big[\Gamma'_{A'^mB^m}(-x^m;-G_Xx^m)\,W_P^{\otimes m}(-x^m)\big]=1-\bar P_{e,X}\,,
\end{equation*}
which is the hypothesis of Theorem~\ref{thm:edcircuit} with $\eps_X=\bar P_{e,X}$. Bob implements them by negating the received frozen value before decoding and negating the decoder outcome. The POVM $\Gamma$ in the setup of Theorem~\ref{thm:edcircuit} is obtained by summing over the $x^m$ with the same value $f(x^m)=(Gx^m)_{\bar A}$. It determines $f(x^m)$ with probability at least $1-\bar P_{e,X}$.

Unlike the amplitude information, the phase information can be decoded with a single measurement for all frozen values. The channel satisfies $W_P(x)=Z_{A'}^xW_P(0)Z_{A'}^{-x}$. For a frozen value $y$, let $c^m\in\ff_d^m$ be the vector with $(Gc^m)_{\cF_X}=y$ and $(Gc^m)_{\mathcal G_P}=\mathbf 0$. The map $x^m\mapsto x^m-c^m$ sends $\{x^m:G_Xx^m=y\}$ onto $\{x^m:G_Xx^m=\mathbf 0\}$ and leaves $(Gx^m)_{\mathcal G_P}$ unchanged. A decoder $\{\Gamma^{(\mathbf 0)}_w\}_w$ for the all-zero frozen value then gives the decoder
\begin{equation}\label{eq:phase_covariance}
\Gamma^{(y)}_w=Z_{A'^m}^{c^m}\,\Gamma^{(\mathbf 0)}_w\,Z_{A'^m}^{-c^m}
\end{equation}
for the frozen value $y$, with the same error probability for every $y$.
Since $U_{A'B}'^{\otimes m}$ is a unitary on Bob's systems, a decoder $\{\tilde\Gamma_w\}_w$ for the heralded mixture obtained by $U_{A'B}'^{\otimes m}$ gives the decoder $\{U_{A'B}'^{\dagger\,\otimes m}\tilde\Gamma_w U_{A'B}'^{\otimes m}\}_w$ for $W_P$ with the same error probability.

The successive-cancellation decoder decides $v_m,\dots,v_1$ in this order. For $i\in\cF_X$ it sets $\hat v_i=v_i$. For $i\in\mathcal G_P$ it performs a projective measurement $\{\Pi^{(i)}_{v|\hat v_{i+1}^m}\}_{v\in\ff_d}$ on $A'^mB^m$ that attains $P_{\rm err}(W_{P,m}^{(i)})$ given $\hat v_{i+1}^m$. By~\eqref{eq:heralded_split}, this measurement reads the herald of $W_{P,m}^{(i)}$ and applies the pretty good measurement to the corresponding PSC.
The POVM element of the decoder for the outcome $v_{\mathcal G_P}$ is
\begin{equation}\label{eq:phase_sc_povm}
\Gamma^{(v_{\cF_X})}_{v_{\mathcal G_P}}=\Pi^{(m)}_{v_m|v_{m+1}^m}\cdots\Pi^{(1)}_{v_1|v_2^m}\cdots\Pi^{(m)}_{v_m|v_{m+1}^m}\,,
\end{equation}
where $\Pi^{(i)}=I$ for $i\in\cF_X$.
The output states of $W_{P,m}^{(i)}$ do not commute in general, and the projectors for different $i$ do not commute. The union bound for sequential projective measurements~\cite{gao_quantum_2015} and~\eqref{eq:perr_bound} give
\begin{equation}\label{eq:PeX_bound}
\bar P_{e,X}\leq4\sum_{i\in\mathcal G_P}P_{\rm err}\big(W_{P,m}^{(i)}\big)\leq4(d-1)\sum_{i\in\mathcal G_P}F\big(W_{P,m}^{(i)}\big)\leq4(d-1)\,m\,2^{-m^\beta}\,.
\end{equation}

\subsection{Entanglement Generation Rate}
By~\eqref{eq:net_count} and~\eqref{eq:set_sizes}, the net rate of the polar code satisfies
\begin{equation}\label{eq:polar_rate}
\lim_{m\to\infty}\frac{|\bar A|-|\underline A|}{m}\log d=\log d-H(A|B)_\rho-H(X|AB)_\tau=-H(A|B)_\sigma\,,
\end{equation}
where the second equality is from Lemma~\ref{lem:coherentinfocalc}. 

\begin{theorem}\label{thm:polar_scheme}
Let $0<\beta<\tfrac12$ and let the registers be given by~\eqref{eq:registers}. Let $\{\Lambda^{(y)}_w\}$ and $\{\Gamma^{(y)}_w\}$ be decoders for the transformed amplitude and phase information with average error probabilities $\bar P_{e,Z}$ and $\bar P_{e,X}$.
Then the scheme of Theorem~\ref{thm:edcircuit} with $H=\mathsf K_m^T$ generates $|\bar A|$ maximally entangled pairs of local dimension $d$. It consumes $|\underline A|$ such pairs and transmits $|\widecheck A|+|\widehat A|+2|\underline A|$ dits of classical information. The fidelity of its output with the maximally entangled state satisfies the bound of Theorem~\ref{thm:edcircuit} with $\eps_Z=\bar P_{e,Z}$ and $\eps_X=\bar P_{e,X}$.
For the successive-cancellation decoders, $\bar P_{e,Z}\leq(d-1)m2^{-m^\beta}$ and $\bar P_{e,X}\leq4(d-1)m2^{-m^\beta}$. The net rate $(|\bar A|-|\underline A|)\log d/m$ converges to $-H(A|B)_\sigma$.
\end{theorem}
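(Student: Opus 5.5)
The plan is to assemble Theorem~\ref{thm:polar_scheme} from results already established in this section together with Theorem~\ref{thm:edcircuit} and Lemma~\ref{lem:resources}; no new estimates are needed.

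\textbf{Step 1: the code and the resource count.} Take $H=\mathsf K_m^T$, which is invertible over $\ff_d$, so $G=\mathsf K_m^{-1}$. Use the partition \eqref{eq:registers}, after permuting the outputs of $V$ into EACSS order. This gives a valid $(k,\ell,s_x,s_z,H)$ EACSS code with $k=|\bar A|$, $s_z=|\widecheck A|$, $s_x=|\widehat A|$ and $\ell=|\underline A|$. Lemma~\ref{lem:resources} then yields the entanglement consumption $|\underline A|$ and the classical communication $|\widecheck A|+|\widehat A|+2|\underline A|$ dits. The output pairs live on $\bar A\bar R$, that is, $|\bar A|$ qudit pairs.

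\textbf{Step 2: the fidelity bound.} I must check the hypotheses of Theorem~\ref{thm:edcircuit}.
\begin{itemize}
\item For $\Lambda$, define it from the amplitude decoder by \eqref{eq:Lambda_from_polar}. The bijection argument shows it is a POVM for each syndrome, and \eqref{eq:epsZ} gives success probability $1-\bar P_{e,Z}$. So $\eps_Z=\bar P_{e,Z}$.
\item For $\Gamma'$, define it by \eqref{eq:Gamma_from_polar} and compose with the sign flip $x^m\mapsto -x^m$, $y\mapsto -y$. This matches the states $Z^{-x^m}\sigma^{\otimes m}Z^{x^m}$ required by the theorem, so $\eps_X=\bar P_{e,X}$.
\item Coarse-graining $\Gamma'$ to $\Gamma$ via $f$ only increases the success probability of determining $f(x^m)$.
\end{itemize}
Theorem~\ref{thm:edcircuit} then gives the fidelity bound directly.

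\textbf{Step 3: the successive-cancellation bounds.} Cite \eqref{eq:PeZ_bound} and \eqref{eq:PeX_bound}. Both rest on the following chain.
\begin{itemize}
\item The reliability bound \eqref{eq:perr_bound} holds for each synthesized channel. For the amplitude channels this uses classicality. For the phase channels it uses unitary equivalence to heralded symmetric PSC mixtures, via closure under channel combination.
\item The defining threshold $2^{-m^\beta}$ on $\mathcal G_A$ and $\mathcal G_P$ bounds each term.
\item A union bound finishes: the classical one for $Z$, and the sequential-projection (Gao) union bound with factor $4$ for $X$.
\end{itemize}
I expect this step to be the main subtlety. The amplitude SC decisions condition on the true $u^{i-1}$, while the decoder uses the estimates $\hat u^{i-1}$. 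The standard fix is to note that the first error event has the same probability whether one conditions on true or estimated values. For the quantum phase decoder, the same issue must be handled inside Gao's bound. That works because \eqref{eq:phase_sc_povm} is a sandwich of projectors each conditioned on the true later values, exactly the form the sequential bound treats. One must also verify that the optimal measurement for each $W_{P,m}^{(i)}$ can be taken projective. This holds by \eqref{eq:heralded_split}: read the herald, then apply the pretty good measurement on geometrically uniform pure states, which is projective after a Naimark extension if needed.

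\textbf{Step 4: the rate.} Combine \eqref{eq:net_count} with \eqref{eq:set_sizes}, which come from the polarization theorems \eqref{eq:amp_polarization} and \eqref{eq:phase_polarization}. Then apply Lemma~\ref{lem:coherentinfocalc} to obtain \eqref{eq:polar_rate}, so $(|\bar A|-|\underline A|)\log d/m\to -H(A|B)_\sigma$. Since $\beta<\tfrac12$, we have $m2^{-m^\beta}\to0$, so both error bounds vanish and the fidelity tends to $1$.
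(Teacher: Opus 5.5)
Your proposal is correct and follows essentially the same route as the paper's proof. Invertibility of $H=\mathsf K_m^T$ and the partition \eqref{eq:registers} give the EACSS code; the paper justifies invertibility by noting that $\mathsf K_2^{\otimes\log m}$ is lower triangular with unit diagonal. Then \eqref{eq:epsZ} and \eqref{eq:epsX}, with the sign flip, feed Theorem~\ref{thm:edcircuit}, Lemma~\ref{lem:resources} gives the resources, and the error bounds and rate are read off from \eqref{eq:PeZ_bound}, \eqref{eq:PeX_bound} and \eqref{eq:polar_rate}. Your extra care in Step 3 covers the genie-aided versus actual successive-cancellation decisions and the projectivity of the per-index measurements; the paper settles both points before the theorem, when it derives those two error bounds.
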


\begin{proof}
The matrix $\mathsf K_2^{\otimes\log m}$ is lower triangular with unit diagonal, so $H=\mathsf K_m^T$ is invertible over $\ff_d$. Up to a permutation of the outputs of $V$, it defines an EACSS code with the registers~\eqref{eq:registers}.
By~\eqref{eq:epsZ} and~\eqref{eq:epsX}, the decoders define POVMs $\Lambda$ and $\Gamma'$ that satisfy the hypotheses of Theorem~\ref{thm:edcircuit} with $\eps_Z=\bar P_{e,Z}$ and $\eps_X=\bar P_{e,X}$. Their coherent implementations are the unitaries $U_\Lambda$ and $U_\Gamma$ in the setup of Theorem~\ref{thm:edcircuit}, which gives the fidelity bound.
The resources follow from Lemma~\ref{lem:resources}. The error bounds are~\eqref{eq:PeZ_bound} and~\eqref{eq:PeX_bound}, and the rate is~\eqref{eq:polar_rate}.
\end{proof}
We now show that the scheme achieves the energy-constrained capacity of the pure-loss channel. For $\eta\leq\tfrac12$ the capacity is zero, so let $\eta>\tfrac12$ and $R<C_Q(\cN,\bar n)$. The correction terms in Lemma~\ref{lem:truncation} vanish as $d\to\infty$. Hence there is a prime $d>\bar n$ for which the truncated input $\ket{\gamma^{(\bar n)}}$ gives $-H(A|B)_\sigma>R$. The mean photon number of each channel input is $\bar n-dr^d/(1-r^d)\leq\bar n$, so the energy constraint holds. For this fixed $d$, Theorem~\ref{thm:polar_scheme} with $0<\beta<\tfrac12$ gives codes whose net rate converges to $-H(A|B)_\sigma>R$ as $m\to\infty$. Their error probabilities are at most $4(d-1)m2^{-m^\beta}\to0$, so the fidelity tends to $1$. Thus every rate below $C_Q(\cN,\bar n)$ is achievable. The limit $m\to\infty$ is taken at fixed $d$.

More generally, let $\cN$ be a degradable PDPC channel. Its quantum capacity is the maximum of $-H(A|B)_\sigma$ over input states. By phase covariance and the concavity of the coherent information for degradable channels, the maximum is attained by an input $\ket\chi=\sum_k\sqrt{p_k}\ket k\ket k$. Theorem~\ref{thm:polar_scheme} with this $p_k$ then achieves the quantum capacity.

\section{Code Construction and Efficient Implementation}\label{sec:bpqm}
In the previous section, we showed that CQ polar decoding of the amplitude and phase information gives a reliable entanglement-generation protocol.
In this section, we describe how to construct the partition $\bar A,\widecheck A,\widehat A,\underline A$ for a $d$-ary polar code of blocklength $m$ on a PDPC channel. We also show that the resulting protocol can be realized efficiently.
We describe the decoder for the amplitude information, which is a classical successive-cancellation computation that Bob runs reversibly.
Next we describe the decoder for the phase information. Its decisions are made by $d$-ary BPQM on heralded mixtures of symmetric pure-state channels.
We describe how the good and bad sets of the amplitude and phase information are chosen for target error probabilities using density evolution.
Finally, we show that every step of the protocol uses a number of operations polynomial in $m$ and $d$.

\subsection{Amplitude decoder}\label{sec:amp_decoder}
The decoder for the amplitude information is the successive-cancellation decoder of~\cite{arikan_source_2010} over $\ff_d$.
It starts from the probabilities $p_kW_A(b_t|k)$ of~\eqref{eq:amp_joint} and computes the a posteriori probabilities of $u_i$ given $\hat u^{i-1}$ and $b^m$ through the $\log m$ levels of recursions of the kernel $\mathsf K_2$.
Since $(u_1,u_2)=(k_1+k_2,k_2)$ for $\mathsf K_2$, each step maps two probability vectors in $\mathbb R^d$ to one. It uses a cyclic convolution for the first input of $\mathsf K_2$ and a pointwise product for the second input given the decision on the first.
Each step uses $O(d^2)$ arithmetic operations and the decoder performs $O(m\log m)$ steps. In total the decoder uses $O(d^2m\log m)$ arithmetic operations, which reduces to the $O(m\log m)$ of~\cite{arikan_source_2010} for $d=2$.
The estimate $\hat k^m$ follows from $(\hat k^m)^T=(\hat u^m)^T\mathsf K_m^{-1}$ with $O(m\log m)$ operations over $\ff_d$.
Bob runs these computations as a reversible circuit and then uncomputes all intermediate values. This gives $U_\Lambda$ with at most twice the number of operations. Since the POVM elements of $\Lambda$ are projectors in the standard basis of $B^m$, $U_\Lambda$ is a coherent implementation as in Lemma~\ref{lem:guessingfidelity}.

\subsection{Phase decoder}\label{sec:phase_decoder}
In the heralded mixture of Section~\ref{sec:polar}, consider the PSC $x\mapsto Z^x\proj{\theta_y}Z^{-x}$ with herald $y$ and $q_y>0$. Its Gram matrix has entries $M^{(y)}_{x,x'}=\sum_k p_k|\alpha_{k,y}|^2\omega^{(x'-x)k}/q_y$. Applying the Fourier convention in~\eqref{eq:eigen_list_definition} gives its eigen list
\begin{equation}\label{eq:eigen_list_y}
\blambda_y=\Big[\frac{d\,p_{-j}\,|\alpha_{-j,y}|^2}{q_y}\Big]_{j\in\ff_d}\,,
\end{equation}
where the indices are taken modulo $d$. The heralded mixture is of the form in~\cite[Definition~17]{mandal_quantum_2026a}, so the results of~\cite{mandal_quantum_2026a} used below apply to it.
With the index order reversed as in Section~\ref{sec:polar}, the synthesized phase channels follow the recursions of polar codes by check node and bit node channel combining in~\cite{mandal_belief_2026}.

For $i\in\mathcal G_P$, the BPQM unitary $U_i$ on $A'^mB^m$ is composed of the check node unitary $U^{\cnop}$ and the bit node unitaries $U^{\vnop}_{\blambda_1,\blambda_2}$ of~\cite{mandal_belief_2026} as extended in~\cite[Remark~14]{mandal_belief_2026}.
These unitaries are controlled on the heralds as in~\cite[Remark~15]{mandal_belief_2026}. The values $v_{i+1}^m$ enter through the shifts described in Section~\ref{sec:implementation}.
The eigen lists $\blambda_1$ and $\blambda_2$ of each bit node unitary follow from~\eqref{eq:eigen_list_y} and the heralds by~\cite[Lemmas~12 and~13]{mandal_belief_2026}.
By~\cite[Theorem~19]{mandal_quantum_2026a}, the pretty good measurement $\{\Pi_v\}_{v\in\ff_d}$ of~\cite[Lemma~18]{mandal_quantum_2026a} on the final qudit of $U_i$ is optimal. The projectors
\begin{equation}\label{eq:bpqm_projector}
\Pi^{(i)}_{v|v_{i+1}^m}=U_i^\dagger\big(\Pi_v\otimes I\big)U_i
\end{equation}
then attain $P_{\rm err}(W_{P,m}^{(i)})$ as required in Section~\ref{sec:polar}.
For each $i\in\mathcal G_P$, Bob adds the outcome coherently into a decision qudit in the state $\ket0$. For $i\in\bar A$ he also subtracts the decision qudit from the corresponding qudit of $\bar R$, which then holds the negated decision required in Section~\ref{sec:polar}. He then applies $U_i^\dagger$ as in~\cite{mandal_quantum_2026a}. These steps for $i=m,\dots,1$ form the successive-cancellation decoder of Section~\ref{sec:polar}. Its coherent implementation is given in Section~\ref{sec:implementation}.

\subsection{Code construction using Density evolution}\label{sec:construction}
We use density evolution to design practical polar codes at a chosen blocklength and target error probability. The polarization and rate guarantees are established in Section~\ref{sec:polar}. Here we estimate the synthesized-channel errors numerically and use them to select the amplitude and phase information sets.
Density evolution (DE) is a standard method to analyze message-passing decoders on tree factor graphs~\cite{richardson2008modern}. It tracks the distribution of the messages through the node updates and estimates the error probability of each decoded symbol. For classical polar codes, the synthesized channels can be estimated by Monte Carlo simulation of the successive-cancellation decoder~\cite{arikan_channel_2009}. For BPQM on CQ channels, Monte Carlo DE was introduced for binary LDPC codes in~\cite{brandsen2022belief} and extended to binary polar codes in~\cite{mandal_polar_2024}. For symmetric $d$-ary PSCs, the messages of BPQM are heralded mixtures described by eigen lists, and DE tracks a population of eigen lists through the check node and bit node updates~\cite{mandal_belief_2026}. We use this eigen-list DE for the phase information and the Monte Carlo construction for the amplitude information.

For the phase information, a bag is a list of $L$ eigen lists as in~\cite{mandal_belief_2026}. It represents a heralded mixture whose herald is uniform over the list. The initial bag contains the eigen lists $\blambda_y$ of~\eqref{eq:eigen_list_y} with $y$ drawn independently with probability $q_y$.
A check node update pairs the entries of two bags in random order. For each pair it draws the herald $h$ with its probability in~\cite[Lemma~12]{mandal_belief_2026} and keeps the eigen list $\blambda^{(\cnop,h)}$ of the check-node output for this herald. A bit node update keeps the eigen list $\blambda^{\vnop}$ of the bit-node output~\cite[Lemma~13]{mandal_belief_2026} for each pair.
Applying these updates along the $\log m$ levels of the recursions in the reversed index order of Section~\ref{sec:polar} gives a bag for every synthesized phase channel~\cite[Algorithm~3]{mandal_belief_2026}. The estimate $\hat P_{\rm err}(W_{P,m}^{(i)})$ is the average of $1-\big(\frac1d\sum_j\sqrt{\lambda_j}\big)^2$ over the eigen lists of its bag~\cite[Lemma~9]{mandal_belief_2026}. By~\eqref{eq:heralded_split}, this average estimates the minimum discrimination error of the heralded mixture.

For the amplitude information, the successive-cancellation decoder of Section~\ref{sec:amp_decoder} is simulated $L$ times~\cite{arikan_channel_2009}. Each run draws $k^m$ with probability $p_{k^m}$ and $b^m$ with probability $\prod_tW_A(b_t|k_t)$. At index $i$ the decoder is given the correct values $u^{i-1}$. The fraction of runs in which the maximum a posteriori estimate of $u_i$ is wrong is the estimate $\hat P_{\rm err}(W_{A,m}^{(i)})$.

Fix target error probabilities $\eps_Z$ and $\eps_X$. The set $\mathcal G_A$ is formed by adding indices in increasing order of $\hat P_{\rm err}(W_{A,m}^{(i)})$ as long as $\sum_{i\in\mathcal G_A}\hat P_{\rm err}(W_{A,m}^{(i)})\leq\eps_Z$. The set $\mathcal G_P$ is formed in the same way with the condition $4\sum_{i\in\mathcal G_P}\hat P_{\rm err}(W_{P,m}^{(i)})\leq\eps_X$, as in~\cite[Lemma~17]{mandal_belief_2026}. The factor $4$ is that of~\eqref{eq:PeX_bound}.
The selected sets give the registers $\bar A,\widecheck A,\widehat A,\underline A$ by~\eqref{eq:registers} and the design rate $(|\bar A|-|\underline A|)\log d/m$. The estimated error sums are used as design criteria motivated by~\eqref{eq:PeZ_bound} and~\eqref{eq:PeX_bound}. Substituting these estimates into the fidelity expression of Theorem~\ref{thm:polar_scheme} gives a numerical prediction of the achieved fidelity bound; the theorem itself applies to the true decoder error probabilities. The population size $L$ controls the numerical resolution, and the stability of the selected sets can be checked by increasing $L$ and repeating the construction.

\subsection{Decoder implementation and complexity}\label{sec:implementation}
The amplitude decoder of Section~\ref{sec:amp_decoder} is the successive-cancellation decoder for source polarization over $\ff_d$~\cite{arikan_source_2010,mori_source_2014}. Its reversible implementation $U_\Lambda$ uses $O(d^2m\log m)$ arithmetic operations.

For the phase information, Bob runs the BPQM unitaries coherently. In binary BPQM, the parameters that describe each message can be kept in ancilla registers, so that the node unitaries are controlled on them coherently~\cite{piveteau_quantum_2022}. We use the same idea for $d$-ary messages and work in the real-register model of~\cite{piveteau_efficient_2025}. In this model a real register stores a real number exactly, and single-qudit gates, two-qudit gates, gates controlled by real registers and arithmetic operations on real registers each have unit cost. Every message carries an eigen-list register, which consists of $d$ real registers, one for each entry of the eigen list of the message. Its state $\ket{\mathbf 0}$ has all entries equal to $0$. Since the entries of an eigen list sum to $d$, the state $\ket{\mathbf 0}$ differs from $\ket{\blambda}$ for every eigen list $\blambda$.

During the check node operation, the check node unitary $U^{\cnop}$ maps two messages to a heralded mixture as in Section~\ref{sec:polar}. Its herald $h\in\ff_d$ is held in the standard basis of the second qudit~\cite[Lemma~12]{mandal_belief_2026}. We write $\blambda^{(\cnop,h)}$ for the eigen list of this heralded mixture for the herald value $h$, as given in~\cite[Lemma~12]{mandal_belief_2026}.
Let $\mathbf h$ be the string of values of all heralds below a message, including the heralds $y$ of the channel outputs. The eigen list of the message is a function $\blambda(\mathbf h)$ of this string. Bob writes it into an eigen-list register by the isometry $\ket{\mathbf h}\mapsto\ket{\mathbf h}\ket{\blambda(\mathbf h)}$. He realizes this isometry by the unitary
\begin{equation}\label{eq:herald_to_register}
\sum_{\mathbf h}\proj{\mathbf h}\otimes\Big(\ketbra{\mathbf 0}{\blambda(\mathbf h)}+\ketbra{\blambda(\mathbf h)}{\mathbf 0}+I-\proj{\mathbf 0}-\proj{\blambda(\mathbf h)}\Big)
\end{equation}
on the heralds and a register in $\ket{\mathbf 0}$. This unitary is well defined because the heralds are held in the standard basis. It leaves the heralds in place for the inverse node unitaries.

Bob does not apply~\eqref{eq:herald_to_register} to all heralds at once. He builds it from local updates that follow the node unitaries. At the start, he computes $\blambda_y$ of~\eqref{eq:eigen_list_y} from the herald $y$ in the system $B$ of each copy into an eigen-list register for this copy.
A check node combines two messages whose eigen-list registers hold $\blambda_1$ and $\blambda_2$. Bob first applies $U^{\cnop}$ to their two qudits of $A'^m$. The second qudit then holds the new herald $h$, and Bob applies
\begin{equation*}
\ket{\blambda_1}\ket{\blambda_2}\ket h\ket{\mathbf 0}\mapsto\ket{\blambda_1}\ket{\blambda_2}\ket h\ket{\blambda^{(\cnop,h)}}
\end{equation*}
to the two eigen-list registers, the herald qudit and a new eigen-list register.
At a bit node, Bob applies the bit node unitary controlled on the two eigen-list registers
\begin{equation}\label{eq:register_bitnode}
\sum_{\blambda_1,\blambda_2}\proj{\blambda_1}\otimes\proj{\blambda_2}\otimes U^{\vnop}_{\blambda_1,\blambda_2}
\end{equation}
to the two qudits. Here $U^{\vnop}_{\blambda_1,\blambda_2}$ is the bit node unitary that~\cite{mandal_belief_2026} constructs from the eigen lists $\blambda_1$ and $\blambda_2$, and the sum runs over the eigen lists that the two registers can hold. He then computes the eigen list $\blambda^{\vnop}$ of~\cite[Lemma~13]{mandal_belief_2026} from the two eigen-list registers into a new eigen-list register.
The new register belongs to the combined message and is read by the next node. The two incoming registers are kept for the inverse node unitary. To invert a node, Bob first uncomputes the new register and then applies the inverse node unitary.

By induction over the nodes, the eigen-list register of every message holds $\blambda(\mathbf h)$ for every string $\mathbf h$ of herald values below it. The gate~\eqref{eq:register_bitnode} then applies the same unitary as the herald-controlled gate of~\cite[Remark~15]{mandal_belief_2026} for every herald value. The pretty good measurement does not depend on the eigen lists and needs no control. The decoder then realizes the projectors~\eqref{eq:bpqm_projector}. These updates replace the classical update of the conditional parameter lists in~\cite{mandal_polar_2024}.

Bob decodes the phase information by the polar decoder of~\cite{mandal_polar_2024}, using the check-node and bit-node unitaries of Section~\ref{sec:phase_decoder}. The decoding factor graph of an index $i$ is the factor graph of the synthesized channel $W_{P,m}^{(i)}$, and its root qudit carries the message for $v_i$. To decode $v_m$, Bob constructs its decoding factor graph and applies the check-node unitaries of this graph.
At the root of an index in $\mathcal G_P$, Bob adds the outcome of the pretty good measurement coherently into its decision qudit and applies $\sum_{v\in\ff_d}\proj v\otimes Z^{-v}$ to the decision qudit and the root qudit. For an index in $\bar A$, he also subtracts the decision qudit from the corresponding qudit of $\bar R$, since the decoder outcome is negated as in Section~\ref{sec:polar}. The decoder also uses the negated frozen values. At a frozen index Bob therefore applies $Z^{v}$ with the transmitted value $v$ of $\widehat A$, or $\sum_{v\in\ff_d}\proj v\otimes Z^{v}$ to the corresponding qudit of $\underline A$ and the root qudit. After the inverse Fourier transform in Figure~\ref{fig:entanglementgeneration}, this qudit holds $v$ in the standard basis. These shifts remove the decided or frozen value from the root qudit, so that the later steps act as for the value $0$. They do not change the eigen lists. They generalize the bit flip applied after each decision in the binary decoder of~\cite{mandal_polar_2024}.
To decode each later value, Bob first applies the inverses of the node unitaries of the previous decoding factor graph that are not part of the new decoding factor graph, starting from the root. He then applies the check-node and bit-node unitaries of the new decoding factor graph that are not yet applied and proceeds at its root in the same way.

After deciding $v_1$, Bob applies the inverse of every step above in reverse order, except the subtractions into $\bar R$. This returns the decision qudits and the eigen-list registers to their initial states. The phase decoder is therefore a unitary, followed by the subtractions into $\bar R$ and the inverse of the unitary. By the principle of deferred measurement, the decision qudits after the unitary hold the decisions of the successive-cancellation decoder with their correct probabilities. The phase decoder is therefore a coherent implementation, as in Lemma~\ref{lem:guessingfidelity}, of a projective measurement on $A'^mB^m$ and Bob's ancillas. With the ancillas in their initial state, the outcome of this measurement in $\bar R$ has the distribution of the outcome of $\Gamma$. Lemma~\ref{lem:guessingfidelity} then applies with the ancillas included in its system $B$.

In the real-register model, the eigen-list registers are infinite dimensional. For binary BPQM, a representation with finitely many bits per register approximates the decoder with an error that vanishes as the number of bits grows~\cite{piveteau_quantum_2022,piveteau_efficient_2025}. We expect this analysis to extend to $d$-ary eigen lists and leave it to future work.
The forward pass applies each check node and bit node unitary of the $\tfrac m2\log m$ kernel pairs once and inverts it once, so it uses $2m\log m$ node unitaries, including inverses, and the same number of eigen-list updates. A check node uses the fixed unitary $U^{\cnop}$ and an update with $O(d)$ arithmetic operations. A bit node uses $O(d^2)$ arithmetic operations for its update and for the amplitudes of the reflections in $U^{\vnop}_{\blambda_1,\blambda_2}$. These reflections use $O(d^2)$ two-level rotations controlled by the amplitudes.
Including the reverse pass, which doubles the number of steps, the phase decoder uses $O(d^2m\log m)$ operations, at most $2m$ eigen-list registers and $|\mathcal G_P|\leq m$ decision qudits at any time. The syndrome extraction $V$ and the final operation $W$ use $O(m\log m)$ controlled additions over $\ff_d$ since $H=\mathsf K_m^T$ has the recursive structure of the polar transform. Together with the amplitude decoder, the protocol uses $O(d^2m\log m)$ operations in total.

\section*{AI Statement}

The authors had been working on efficient quantum decoding constructions since well before the recent advances in AI, and investigating quantum communication problems to which BPQM can be applied. Based on several recent generalizations of BPQM, the authors independently identified PDPC channels as the class for which BPQM is optimal for entanglement generation. All results and proof ideas are the authors' own.

During the later stages of preparation, the authors used ChatGPT (GPT Sol-6 and earlier versions) and Claude (Opus 5.5) to check and refine the written presentation of the proofs, to improve the clarity of the manuscript, and to adapt our descriptions from previous works to the notation and setting of this paper. All AI-suggested changes were reviewed and verified by the authors. The authors take full responsibility for the content of the manuscript.

\printbibliography[heading=bibintoc,title={\fontsize{13pt}{16pt} References}]
\end{document}